\newtheorem{theorem}{Theorem}[section]
\newtheorem{definition}[theorem]{Definition}
\newtheorem{remark}[theorem]{Remark}
\newtheorem{example}[theorem]{Example}
\begin{document}
\title{Classical-Quantum Arbitrarily Varying Wiretap Channel: Ahlswede Dichotomy, Positivity,
Resources, Super Activation}
\author{Holger Boche\\
Lehrstuhl f\"ur Theoretische
Informationstechnik,\\
 Technische Universit\"at M\"unchen,\\
Munich, Germany\\
boche@tum.de \and Minglai Cai\\
Lehrstuhl f\"ur Theoretische
Informationstechnik,\\
 Technische Universit\"at M\"unchen,\\
Munich, Germany\\
minglai.cai@tum.de \and Christian Deppe \\
Fakult\"at f\"ur
 Mathematik\\ 
Universit\"at Bielefeld,\\
Bielefeld, Germany,\\
cdeppe@mathematik.uni-bielefeld.de  \and Janis N\"otzel  \\
 Universitat Aut\`{o}noma de Barcelona,\\
 Barcelona, Spain\\
Janis.Notzel@uab.cat}

\maketitle
\begin{abstract}
We establish the Ahlswede dichotomy for arbitrarily varying
classical-quantum
wiretap channels, i.e., either the deterministic secrecy capacity of
the channel is zero, or it equals its randomness-assisted
secrecy capacity.
We analyze the secrecy capacity of these channels when the sender and the
receiver use various resources. It turns out that randomness,
common randomness, and correlation as resources are very helpful for
achieving a positive secrecy capacity.
We prove the phenomenon
``super-activation'' for arbitrarily varying classical-quantum
wiretap channels, i.e., two channels, both with zero
deterministic secrecy capacity, if used together allow perfect
secure transmission. 
\end{abstract}

\tableofcontents
\section{Introduction}\label{tdimcsareqcsw}

The developments in modern communication systems are rapid. 
Especially quantum communication systems allow us to exploit 
new possibilities while at the same time imposing fundamental 
limitations. Quantum information processing systems provide huge 
theoretical advantages over their classical counterparts, one of the two most prominent ones being \emph{perfect secrecy}  (cf.
\cite{Be/Br} and \cite{Be} for two well-known examples of 
quantum key distributions). The impact of quantum information 
processing systems on our daily live is nonetheless still zero, 
the main reason for that being the difficulty to store and manipulate quantum states in a predictable and reliable manner.

In this work, we bring these two aspects together, namely we 
investigate the transmission of messages from a sending to a 
receiving party. The messages ought to be kept secret from an 
eavesdropper. Communication takes place over a quantum channel 
which is, in addition to noise from the environment, subjected 
to the action of a jammer which actively manipulates the states.

Preceding work in quantum information theory  has mostly focused on either 
of the two attacks. Our goal is to deliver a more general theory considering 
both channel robustness and security in quantum information theory. By doing 
so, we build on the preceding works \cite{Bl/Ca} and \cite{Bo/No}. 
Furthermore, we are interested in the delivery of large volumes of 
messages over many channel uses, so that we study the asymptotic behavior of the system.

Our work fits into a broader range of recent work in both classical and quantum information theory 
\cite{Wi/No/Bo,Ca/Wi/Ye,Bo/Ca/Ca/De,Bl/Ca,Bj/Bo/So,Bo/Wy,De,he-khisti-yener,ITW2010,Li/Kr/Po/Sh} that
 studies the secret information processing tasks with the aim of delivering \emph{embedded security}: Unlike what is nowadays the standard approach in secret communication, 
namely to first ensure the successful transmission of messages and then implement a cryptographic protocol on top whose security relies on assumptions concerning the difficulties 
in breaking the protocol, this new paradigm focuses on delivering a guaranteed security right from 
the start. The security features of the protocol become embedded already at the physical 
layer of the communication system. The concept does not only cover secure message transmission but also secure key generation.

Also, communication models including a jammer that tries to prevent the legal parties 
from communicating properly have received a great lot of attention in recent years, 
some of which we have already mentioned above include wiretapping aspects, while some do not \cite{Ahl/Bj/Bo/No}.
These publications concentrated on the model of an \emph{arbitrarily varying channel} 
where the jammer may change his input in every channel use and is not restricted to use a 
repetitive probabilistic strategy. Quite on the contrary, it is understood that the sender 
and the receiver have to select their coding scheme first. After that the jammer makes his 
choice of the channel state. The model of an arbitrarily varying channel was first introduced by Blackwell, Breiman, and Thomasian in \cite{Bl/Br/Th2}.
The nature of the model is quite flexible: It allows specifying the impact that the actions of the jammer may have on the communication link under 
use: In the most restrictive case where the jammer is left with only one choice, we recover the discrete memoryless channel. On the other extreme, it 
has been shown by Ahlswede in \cite{Ahl0}  that the capacity (under maximal error criterion) of certain arbitrarily varying channels can be equated to the 
zero-error capacity of related discrete memoryless channels.
The arbitrarily varying channel does at the same time demonstrate the importance of shared randomness for communication in a very clear form: Ahlswede showed 
in \cite{Ahl1}  (cf. also \cite{Ahl2} and \cite{Ahl3}) the surprising result that 
either the deterministic capacity of an arbitrarily varying channel is zero, 
or it equals its shared randomness-assisted capacity (this effect is now known as 
the Ahlswede dichotomy). After that discovery, it remained an open question exactly 
when the deterministic capacity is nonzero. In \cite{Rei} Ericson gave a sufficient 
condition for that, and in \cite{Cs/Na} Csisz\'ar and Narayan proved that this is condition is also necessary.

In this work, we will therefore put a focus on the analysis of different 
forms of shared randomness and their impact on the robustness and security.
The model of a wiretap channel adds a third party to the communication 
problem as well, but here the focus is on secure communication, meaning 
communication without that third party getting to know the messages. This 
model was first introduced by Wyner in \cite{Wyn} (in this paper we will 
use a stronger security criterion than the one that was used in \cite{Wyn}, cf. Remark \ref{remsc}).
The relation of the different security criteria is discussed, e.g. in 
\cite{Bl/La} with some generality and in \cite{Wi/No/Bo} with respect to arbitrarily varying channels.

In the model of an arbitrarily varying wiretap channel, we consider transmission 
with both a jammer and an eavesdropper. Its secrecy capacity has been analyzed in 
\cite{Bj/Bo/So}. A lower bound of the randomness-assisted secrecy capacity has been 
given. It is worth noting that the channel under consideration in this work is 
effectively given by an \emph{interference channel} where the legal sender and 
the jammer are allowed to make inputs to the system and the legal receiver as 
well as the eavesdropper receive the corresponding outputs. We do leave open
 the possibility of the jammer communicating his choice of input (equivalently: 
his channel state sequence) to the eavesdropper, but limit the receiving parties 
such that they cannot send any messages back to the jammer or the legal receiver.
During proofs and when defining the model, we will, however deviate from this point 
of view and use a notation which respects the historic development of results on 
arbitrarily varying channels.
The physical model we consider is that of a classical-quantum channel, i.e., 
the legal sender's and the jammer's inputs are classical data and the legal 
receiver's as well as the eavesdroppers outputs are quantum systems. The capacity 
of classical-quantum channels without secrecy constraints or active jamming has 
been determined in \cite{Ho} and \cite{Sch/Wes}.

A classical-quantum channel with a jammer is called an arbitrarily varying 
classical-quantum channel. In \cite{Ahl/Bli} the capacity of arbitrarily varying 
classical-quantum channels is analyzed. A lower bound of the capacity has been given.
 An alternative proof  and a proof of the strong converse are given 
in \cite{Bj/Bo/Ja/No}. In \cite{Ahl/Bj/Bo/No} the Ahlswede dichotomy for the arbitrarily varying 
classical-quantum channels is established, and a sufficient and necessary condition for the zero 
deterministic capacity is given. In \cite{Bo/No} a simplification of this condition for the 
arbitrarily varying classical-quantum channels is given. A classical-quantum 
channel with an eavesdropper is called a classical-quantum wiretap channel, 
its secrecy capacity has been determined in \cite{De} and \cite{Ca/Wi/Ye}.

 A classical-quantum channel with both a jammer and an eavesdropper is called 
an arbitrarily varying classical-quantum wiretap channel. It is defined as a 
family of pairs of indexed channels $\{(W_t,V_t):t=1,\ldots,T\}$ with a common 
input alphabet and possible different output systems, connecting a sender with two 
receivers, a legal one and a wiretapper, where $t$ is called a channel state of the 
channel pair. The legitimate receiver accesses the output of the first part of the pair, 
i.e., the first channel $W_t$ in the pair, and the wiretapper observes the output of the 
second part, i.e., the second channel $V_t$, respectively. A channel state $t$, 
which varies from symbol to symbol in an arbitrary manner, governs both the 
legal receiver's channel and the wiretap channel. A code for the channel 
conveys information to the legal receiver such that the wiretapper knows 
nothing about the transmitted information in the sense of the stronger 
security criterion (cf. Remark \ref{remsc}). This is a generalization of 
compound classical-quantum wiretap channels in \cite{Bo/Ca/Ca/De}, when the channel states are not stationary, but can change over time.

The secrecy capacity of the arbitrarily varying classical-quantum wiretap channels has been analyzed in \cite{Bl/Ca}. A lower 
bound of the randomness-assisted capacity has been given, and it has been shown that this 
bound is either a lower bound for the deterministic capacity, or else the deterministic capacity is equal to zero.
As mentioned already, we will be interested in the role that different forms of shared randomness play for the arbitrarily 
varying classical-quantum wiretap channel.
To this end, we will distinguish between three  kinds of shared randomness: \emph{randomness}, \emph{common randomness}, and \emph{correlation}.
Randomness and common randomness have been used as a method of proof, e.g., in \cite{Ahl1} and much of the follow-up work for the determination of 
the random capacity. If looked at as a resource for communication which is to be deployed in order to make a communication 
link work reliably, they are, however, a rather strong form of a resource: It is required that both sender and receiver 
have access to a perfect copy of the outcome of a random experiment. Moreover, the outcomes of said experiment 
have to be distributed uniformly. The impact of deviations from these strong requirements has not yet received 
much attention. What has been investigated (starting with \cite{Ahl/Cai} and continued in \cite{Bo/No}) is a 
variant where the common randomness gets replaced by a resource that is in some sense the complete opposite: correlation.

Assume that a bipartite source, modeled by an i.i.d. random variable $(X,Y)$  
with values in a finite product set $\mathbf X\times\mathbf Y$, is observed by the 
sender and (legal) receiver. The sender has access to the random variable $X$ and 
the receiver to $Y$. We call $(X,Y)$ correlated shared randomness whenever the 
mutual information between $X$ and $Y$ satisfies $I(X;Y)>0$.

It has been shown in \cite{Ahl/Cai} that correlated shared randomness is a helpful resource for information transmission through an arbitrarily varying classical channel: 
The use of mere correlation does already allow one to transmit messages at any rate that would be achievable using any form of shared randomness. 
The capacity of an arbitrarily varying quantum channel assisted by correlated shared randomness as resource has been discussed in \cite{Bo/No},
 where equivalent results were found. 
In this work, we extend the concept of correlation-assisted coding to the arbitrarily varying classical-quantum wiretap channel.

In \cite{Bo/No} a classification of various resources is given. A distinction is made between  two extremal cases:
randomness and correlation. Randomness is the strongest resource, it
requires a perfect copy of the
outcome of a random experiment, and thus we should assume
an additional perfect channel.
On the other hand, correlation is the weakest resource.  
The work \cite{Bo/No} also put emphasis on the quantification of the differences between 
correlation and common randomness and used the arbitrarily varying classical-quantum channel as a method of proof.
It can be shown that common randomness is a stronger resource than  correlation in the following sense:
An example is given when not even a finite amount of common randomness can be extracted from a given  correlation.
On the contrary,  a sufficiently large amount of common randomness  allows the sender and receiver to 
asymptotically simulate the statistics of any correlation.

We concentrate our analysis on the case without feedback, i.e., we neither allow the receiver to send messages back to the sender (or the jammer), 
nor do we allow the eavesdropper to send messages toward the jammer (or the sender).
Such an approach may be deemed unsatisfactory from a practical perspective. However, a brief look into the history of the arbitrarily varying channel 
reveals that only the reduction to the case of deterministic codes (without feedback) leads one to encounter those cases where the capacity of the system 
is zero, while a dramatic increase to full capacity is possible as soon as shared randomness (or feedback) is available.

In the situation investigated here, the reduction to forward communication allows us to demonstrate the effect of super-activation of the secrecy capacity 
of the arbitrarily varying classical-quantum channel. We take the space to write a few lines concerning more elaborate models. The case where a (secure) 
channel from the receiver to the sender is available is likely to be equivalent to the case where shared randomness can be used when the average error 
criterion is used. The latter will be treated in forthcoming work. In the model treated here, the presence of an eavesdropper makes us take the freedom to 
allow randomness at the encoder, which makes the average error criterion equivalent to the maximal error criterion \cite{Ahl1} (and \cite{Bo/No} for the 
quantum case). The case of deterministic codes in the presence of feedback but with the code performance being evaluated with respect to the maximal 
error criterion has been evaluated in \cite{Ahl-feedback}.

It can easily be seen now that the complexity of the channel model under investigation here necessitates a 
strict reduction in the abilities of the participating parties, at least if the aim is the establishment of definite results.

Our secrecy criterion is chosen such that the messages sent by the sender are to be kept \emph{strongly} 
secret. More precisely, the use of shared randomness creates ensembles 
$(R_{\mathrm{uni}},Z_{t^n}):=(J^{-1},V_{t^n}(E^\gamma(\cdot|j))_{j=1}^J$ where $j=1,\ldots,J$ are messages 
and $E^\gamma(\cdot|1),\ldots,E^\gamma(\cdot|J)$ are probability distributions of the codewords associated with the messages. 
The index $\gamma$ refers to a particular choice of encoding scheme. This index may be shared with the receiver 
(common randomness) or may just be correlated with another index $\gamma'$ at the receiver (correlated codes, 
in that case both $\gamma$ and $\gamma'$ are actually elements of product alphabets $\mathbf X^n$ and $\mathbf Y^n$).

Our strong secrecy criterion requires that the Holevo information $\chi(R_{\mathrm{uni}},Z_{t^n,\gamma})$ of the 
ensembles consisting of the messages and the output at the eavesdropper's system is to be kept small in a 
yet to be defined sense. More precisely, we require that the Holevo information is to be kept small on 
average over the random choice of codewords and for all possible choices of the jammer, i.e., 
$\max_{t^n}\chi(R_{\mathrm{uni}},\mathcal Z_{t^n}|\Gamma)$ is to vanish asymptotically.

Using this secrecy criterion is a key to prove super-activation of the 
deterministic secrecy capacity of the arbitrarily varying classical-quantum 
channel: We take two arbitrarily varying classical-quantum wiretap channels. 
One of them is assumed to have zero capacity for message transmission because 
it is symmetrizable in the sense of \cite{Ahl/Bli}, but its common-randomness-assisted
 capacity is positive. The other is assumed to be non-symmetrizable but insecure.

In 
\cite{Wi/No/Bo2}
a new code concept for secrecy capacity 
 and a complete characterization of super-activation
for classical
arbitrarily varying wiretap channels 
with no sharing resources
has been given.
In view
of this work on  classical
arbitrarily varying wiretap channels our further
task will be to analyze this characterization on
arbitrarily varying classical-quantum wiretap channels.

Through parallel transmission of common randomness on the insecure 
channel and secure data on the other one, the combined system 
can be proven to have positive capacity. Roughly speaking, the 
proof uses the fact that the choices of common randomness and 
messages are independent from each other and only the codewords 
depend on both of them, together with the data processing inequality 
applied to the Holevo quantity. Details are to be found in the respective section.

The operational interpretation of the secrecy criterion that we 
employ here comes through application of the (quantum) Pinsker's 
inequality. Note that, in an average sense, the eavesdropper ``knows'' 
the index $\gamma$ of the random code. It is clear that, under such 
circumstances, backwards communication toward the jammer would render the shared randomness completely useless.

A more in-depth discussion of secrecy criteria in the quantum case, 
including fully quantum channels but not the arbitrarily varying case, 
can be found in the recent preprint \cite{winter-locking}. Different 
secrecy criteria for arbitrarily varying quantum or classical-quantum channels will be evaluated in future work.

 \vspace{0.15cm}

 This
paper is organized as follows.\vspace{0.15cm}

The main definitions  are given in
Section \ref{BDaCS}.

In Section  \ref{ADFAVCQWC} we generalize the result of \cite{Bl/Ca} by  establishing
the Ahlswede dichotomy for the  arbitrarily varying classical-quantum  wiretap  channels (without feedback),
i.e.,  either the deterministic secrecy  capacity of an
arbitrarily varying classical-quantum
wiretap  channel is zero, or it equals its randomness-assisted
 secrecy  capacity.

In Section  \ref{AVCQWCWCA} we   analyze the secrecy capacity
of an
 arbitrarily varying classical-quantum  wiretap  channel assisted by 
correlation as resource. We show that 
correlation  is  a helpful resource for  secure information
transmission through an arbitrarily varying classical-quantum
wiretap    channel.

In Section \ref{AP} we give an example in which both cases of the
 Ahlswede dichotomy for the  arbitrarily varying classical-quantum  wiretap  channels
  actually occur.
We  present a new discovery for the
 arbitrarily varying classical-quantum  wiretap  channels
which is a consequence of the Ahlswede dichotomy for the  arbitrarily varying
classical-quantum  wiretap  channels. This  phenomenon is called
``super-activation'', i.e., two arbitrarily varying classical-quantum
 wiretap channels, both with zero   deterministic  secrecy  capacity,
if used  together allow perfect secure transmission. 

Finally, we will conclude in Section \ref{concl} with a discussion of our
results.

\section{Communication
Scenarios and Resources}\label{BDaCS}

\subsection{Basic Definitions and Communication Scenarios}

\allowdisplaybreaks 

For  a finite set $\mathbf{A}$, we denote  the
set of probability distributions on $\mathbf{A}$ by $P(\mathbf{A})$.
Let $H$ be a finite-dimensional
complex Hilbert space. We denote the (convex) space of  density operators on $H$
by  $\mathcal{S}(H)$. A classical-quantum channel is
a linear map $W: P(\mathbf{A})\rightarrow\mathcal{S}(H)$,
$P(\mathbf{A})  \ni P
\rightarrow W(P) \in \mathcal{S}(H)$.
Let $a\in \mathbf{A}$. For a $P_a\in P(\mathbf{A})$, 
defined by $P_a(a')= \begin{cases} 1 &\mbox{if } a'=a\\
0 &\mbox{if } a'\not=a \end{cases}$, we 
write $W(a)$ instead of $W(P_a)$.

\begin{remark} In many literature, a classical-quantum channel is
defined as a map $ \mathbf{A}\rightarrow\mathcal{S}(H)$,
$\mathbf{A}  \ni a
\rightarrow W(a) \in \mathcal{S}(H)$. This is a 
special case when the input is limited on the set $\{P_a :
a\in \mathbf{A}\}$. \end{remark}

For any finite set $\mathbf{A}$, any finite-dimensional
complex Hilbert space $H$, and  $n\in\mathbb{N}$, we define ${\mathbf{A}}^n:= \Bigl\{(a_1,\ldots,a_n): a_i \in \mathbf{A}
\text{ } \forall i \in \{1,\ldots,n\}\Bigr\}$, and $H^{\otimes n}:=
span\Bigl\{v_1\otimes \ldots \otimes v_n: v_i \in H
\text{ } \forall i \in \{1,\ldots,n\}\Bigr\}$. We also write $a^n$ 
for the elements of
${\mathbf{A}}^n$.

Associated with $W$ is the channel map on the n-block $W^{\otimes n}$: $P({\mathbf{A}}^n)
\rightarrow \mathcal{S}({H}^{\otimes n})$, such that 
 $W^{\otimes n}(P^n) = W(P_1)
\otimes \ldots \otimes W(P_n)$
if $P^n\in P(\mathbf{A}^{ n})$ can be written as
$(P_1,\ldots,P_n)$. 
Let $\theta$ $:=$ $\{1,\ldots,T\}$ be a finite set.
Let $\Bigl\{W_t:t\in\theta\Bigr\}$ be a set of classical-quantum channels.
For $t^n=(t_1,\ldots,t_n)$, $t_i\in\theta$ we define the n-block $W_{t^n}$ 
such that  for
 $W_{t^n}(P^n) = W_{t_1}(P_1)
\otimes \ldots \otimes W_{t_n}(P_n)$
if $P^n\in P(\mathbf{A}^{ n})$ can be written as
$(P_1,\ldots,P_n)$.  
\vspace{0.15cm}

Let $\mathfrak{P}$ and $\mathfrak{Q}$ be
quantum systems, denote the Hilbert space of $\mathfrak{P}$ and
$\mathfrak{Q}$ by $H^\mathfrak{P}$ and $H^\mathfrak{Q}$,
respectively.
We denote the space of  density operators on $H^\mathfrak{P}$ and  $H^\mathfrak{Q}$
by  $\mathcal{S}(H^\mathfrak{P})$ and $\mathcal{S}(H^\mathfrak{Q})$, respectively. A
quantum channel $N$: $\mathcal{S}(H^\mathfrak{P}) \rightarrow \mathcal{S}(H^\mathfrak{Q})$,
$\mathcal{S}(H^\mathfrak{P})  \ni \rho
\rightarrow N(\rho) \in \mathcal{S}(H^\mathfrak{Q})$
is represented by a completely positive trace preserving
map,
 which accepts input quantum states in $\mathcal{S}(H^\mathfrak{P})$ and produces output quantum
states in  $\mathcal{S}(H^\mathfrak{Q})$.

Associated with $N$ is
the channel maps on the n-block ${N}^{\otimes n}$: $\mathcal{S}({H^\mathfrak{P}}^{\otimes n}) \rightarrow
\mathcal{S}({H^\mathfrak{Q}}^{\otimes n})$ such that for $\rho^{n} = \rho_1\otimes \ldots \otimes \rho_n
\in \mathcal{S}({H^\mathfrak{P}}^{\otimes n})$
 ${N}^{\otimes n}(\rho^{n}) = {N}(\rho_1)
\otimes \ldots \otimes {N}(\rho_n)$.
For $t^n=(t_1,\ldots,t_n)$, $t_i\in\theta$, we define the $n$-block $N_{t^n}$ 
such that  for $\rho^{n} = \rho_1\otimes \ldots \otimes \rho_n
\in \mathcal{S}({H^\mathfrak{P}}^{\otimes n})$ we have
 $N_{t^n}(\rho^n) = N_{t_1}(\rho_1)
\otimes \ldots \otimes N_{t_n}(\rho_n)$.

We denote the identity operator on a space $H$ by $\mathrm{id}_H$.\vspace{0.15cm}

For a discrete random variable $X$  on a finite set $\mathbf{A}$ and a discrete
random variable  $Y$  on  a finite set $\mathbf{B}$,  we denote the Shannon entropy
of $X$ by
$H(X)=-\sum_{x \in \mathbf{A}}p(x)\log p(x)$ and the mutual information between $X$
and $Y$ by  
$I(X;Y) = \sum_{x \in \mathbf{A}}\sum_{y \in \mathbf{B}}  p(x,y) \log{ \left(\frac{p(x,y)}{p(x)p(y)} \right) }$.
Here $p(x,y)$ is the joint probability distribution function of $X$ and $Y$, and 
$p(x)$ and $p(y)$ are the marginal probability distribution functions of $X$ and $Y$, respectively,
and ``$\log$''  means logarithm to base $2$.\vspace{0.15cm}

For a quantum state $\rho\in \mathcal{S}(H)$, we denote the von Neumann
entropy of $\rho$ by \[S(\rho)=- \mathrm{tr}(\rho\log\rho)\text{
,}\] where ``$\log$''  means logarithm to base $2$.
Let $\Phi := \{\rho_x : x\in \mathbf{A}\}$  be a set of quantum  states
labeled by elements of $\mathbf{A}$. For a probability distribution  $Q$
on $\mathbf{A}$, the    Holevo $\chi$ quantity is defined as
\[\chi(Q;\Phi):= S\left(\sum_{x\in \mathbf{A}} Q(x)\rho_x\right)-
\sum_{x\in \mathbf{A}} Q(x)S\left(\rho_x\right)\text{ .}\]
Note that we can always associate a state 
$\rho^{XY}=\sum_{x}Q(x)|x\rangle\langle x|\otimes \rho_x$ to
$(Q;\Phi)$ such that $\chi(Q;\Phi)=I(X;Y)$ holds for the quantum
mutual information.
\vspace{0.15cm}

\begin{definition}
Let $\mathbf{A}$ be a finite set. Let
 $H$  be a finite-dimensional
complex Hilbert space, and
  $\theta$ $:=$ $\{1,\ldots,T\}$ be a finite set.
    For every $t \in \theta$,  let $W_{t}$   be a classical-quantum channel
$P(\mathbf{A}) \rightarrow \mathcal{S}(H)$.
The set of the  quantum
channels  $\{W_t : t \in \theta\}$
defines an  arbitrarily varying
classical-quantum  channel.
\end{definition}

Strictly speaking, the set $\{W_t : t \in \theta\}$ generates the 
arbitrarily varying
classical-quantum  channel $\{W_{t^n} : t^n \in \theta^n\}$.
When the sender inputs a  $P^n \in P({\mathbf{A}}^n)$ into the channel, the receiver
receives the output $W_{t^n}(P^n)
 \in \mathcal{S}(H^{\otimes n})$, where $t^n = (t_1, t_2, \ldots ,
t_n)\in\theta^n$ is the channel state of $W_{t^n}$.

\begin{definition}\label{symmet}
We say that the arbitrarily varying classical-quantum  channel
$\{W_t : t \in \theta\}$ is symmetrizable if
 there exists a
parametrized set of distributions $\{\tau(\cdot\mid a):
 a\in \mathbf{A}\}$ on $\theta$ such that for all $a$, ${a'}\in \mathbf{A}$,
\[\sum_{t\in\theta}\tau(t\mid a)W_{t}({a'})=\sum_{t\in\theta}\tau(t\mid {a'})W_{t}(a)\text{ .}\]
\end{definition}

\begin{definition}
Let $\mathfrak{P}$ and $\mathfrak{Q}$ be
quantum systems, denote the Hilbert Space of $\mathfrak{P}$ and
$\mathfrak{Q}$ by $H^\mathfrak{P}$ and $H^\mathfrak{Q}$,
respectively, and let
  $\theta$ $:=$ $\{1,\ldots,T\}$ be a finite set. For every $t \in \theta$,  let ${W'}_{t}$   be a quantum channel
$\mathcal{S}(H^\mathfrak{P}) \rightarrow \mathcal{S}(H^\mathfrak{Q})$.
We call the set of the  quantum
channels  $\{{W'}_t: t \in \theta\}$ an  arbitrarily varying
quantum  channel
when the state $t$ varies from
symbol to symbol in an  arbitrary manner.
We denote the set of  arbitrarily varying
quantum channels
$\mathcal{S}(H^\mathfrak{P}) \rightarrow \mathcal{S}(H^\mathfrak{Q})$ 
by $C(H^\mathfrak{P},H^\mathfrak{Q})$.
\end{definition}

\begin{definition}
Let $\mathbf{A}$ be a finite set. Let
 $H$ and $H'$ be finite-dimensional
complex Hilbert spaces. Let
  $\theta$ $:=$ $\{1,\ldots,T\}$ be a finite set. For every $t \in \theta$  let $W_{t}$  
	be a classical-quantum channel
$P(\mathbf{A}) \rightarrow \mathcal{S}(H)$ and ${V}_t$ be a classical-quantum channel $P(\mathbf{A})
\rightarrow \mathcal{S}(H')$. We call the set of the  classical-quantum
channel pairs  $\{(W_t,{V}_t): t \in \theta\}$ an \bf arbitrarily
varying classical-quantum wiretap channel\it, the legitimate
receiver accesses the output of the first channel, i.e., $W_t$  in the pair
$(W_t,{V}_t)$, and the wiretapper observes the output of the second
 channel, i.e.,  ${V}_t$ in the pair $(W_t,{V}_t)$, respectively, when the
state $t$ varies from symbol to symbol in an  arbitrary
manner.\end{definition}

When the sender inputs  a sequence $a^n \in {\mathbf{A}}^n$  into the channel, the receiver
receives the output $W_{t^n}(a^n)
 \in \mathcal{S}(H^{\otimes n})$, where $t^n = (t_1, t_2, \ldots ,
t_n)\in\theta^n$ is the channel state, while the wiretapper  receives an output quantum  state ${V}_{t^n}(a^n)
 \in \mathcal{S}({H}'^{\otimes n})$.\vspace{0.2cm}

\subsection{Code Concepts and Resources}

Our goal is to see what
 the effects on the secrecy capacities of an arbitrarily
varying classical-quantum wiretap channel are if  the sender and the legal
receiver have the possibility to use various kinds of resources.
 We also want to
investigate  what amount of randomness   is necessary for the robust and
secure message transmission through an arbitrarily varying
classical-quantum wiretap channel.
 Hence, we consider
various kinds of resources, each of them requiring a different amount of randomness, and
we consider different codes, each of
them requiring a different kind of resource.

\begin{definition}
 An $(n, J_n)$   (deterministic)  code $\mathcal{C}$ for the
arbitrarily
varying classical-quantum wiretap channel $\{(W_t,{V}_t): t \in \theta\}$
consists of a stochastic encoder $E$ : $\{
1,\ldots ,J_n\} \rightarrow P({\mathbf{A}}^n)$, 
$j\rightarrow E(\cdot|j)$,
 specified by
a matrix of conditional probabilities $E(\cdot|\cdot)$, and
 a collection of positive semi-definite operators $\left\{D_j: j\in \{ 1,\ldots ,J_n\}\right\}$
on ${H}^{\otimes n}$,
which is a partition of the identity, i.e., $\sum_{j=1}^{J_n} D_j =
\mathrm{id}_{{H}^{\otimes n}}$. We call these  operators the decoder operators.
\end{definition}
A code is created by the sender and the legal receiver before the
message transmission starts. The sender uses the encoder to encode the
message that he wants to send, while the legal receiver uses the
decoder operators on the channel output to decode the message.

\begin{remark}
 An $(n, J_n)$   deterministic code $\mathcal{C}$ with deterministic encoder
consists of a family of $n$-length strings of symbols $\left(c_j\right)_{j\in
\{1,\ldots ,J_n\}} \in \left({\mathbf{A}}^n\right)^{J_n}$   and
 a collection of positive semi-definite operators $\left\{D_j: j\in \{ 1,\ldots ,J_n\}\right\}$
on ${H}^{\otimes n}$
which is a partition of the identity.

 The deterministic encoder is a special case of the
 stochastic encoder when we require that for every $j\in
\{1,\ldots ,J_n\}$, there is a sequence $a^n\in {\mathbf{A}}^n$  chosen with probability $1$.
 The standard technique for  message transmission over a  channel and
robust message transmission over an arbitrarily
varying  channel
 is to use the deterministic encoder
(cf.  \cite{Ahl/Bj/Bo/No} and \cite{Bo/No}).
 However, we use  the stochastic encoder, since it is  a tool
 for  secure message transmission over wiretap channels (cf. \cite{Bl/Ca} and \cite{Ahl/Bli}).
\label{detvsran}
\end{remark}

\begin{definition}
A nonnegative number $R$ is an achievable  (deterministic) secrecy
rate for the arbitrarily varying classical-quantum wiretap channel
$\{(W_t,{V}_t): t \in \theta\}$ if for every $\epsilon>0$, $\delta>0$,
$\zeta>0$ and sufficiently large $n$ there exist an  $(n, J_n)$
code $\mathcal{C} = \bigl(E, \{D_j^n : j = 1,\ldots J_n\}\bigr)$  such that $\frac{\log
J_n}{n}
> R-\delta$, and
\begin{equation} \max_{t^n \in \theta^n} P_e(\mathcal{C}, t^n) < \epsilon\text{ ,}\label{annian1}\end{equation}
\begin{equation}\max_{t^n\in\theta^n}
\chi\left(R_{\mathrm{uni}};Z_{t^n}\right) < \zeta\text{
,}\label{b40}\end{equation} where $R_{\mathrm{uni}}$ is the uniform
distribution on $\{1,\ldots J_n\}$. Here
$P_e(\mathcal{C}, t^n)$ (the average probability of the decoding error of a
deterministic code $\mathcal{C}$, when the channel state  of the
arbitrarily varying classical-quantum wiretap channel $\{(W_t,{V}_t): t \in \theta\}$  is $t^n =
(t_1, t_2, \ldots , t_n)$), is defined as \[ P_e(\mathcal{C}, t^n) := 1- \frac{1}{J_n} \sum_{j=1}^{J_n}
\mathrm{tr}(W_{t^n}(E(~|j))D_j)\text{ ,}\] 
$Z_{t^n}=\Bigl\{{V}_{t^n}(E(~|i)):$ 
$i\in\{1,\ldots,J_n\}\Bigr\}$ 
 is the set of the resulting
quantum state at the output of the wiretap channel when the channel
state of $\{(W_t,{V}_t): t \in \theta\}$ is $t^n$.
\end{definition}

\begin{remark} A weaker and widely used
 security criterion is obtained if we replace
(\ref{b40})  with $\max_{t \in \theta}\frac{1}{n}
\chi\left(R_{\mathrm{uni}};Z_{t^n}\right)$ $<$ $\zeta$. In this paper  we
will follow
  \cite{Bj/Bo/So} and use  (\ref{b40}).
\label{remsc}\end{remark}

\begin{remark}When we defined  $W_{t}$ as
$ \mathbf{A}\rightarrow\mathcal{S}(H)$, then
$P_e(\mathcal{C}, t^n)$ is defined as $1-$ $\frac{1}{J_n} \sum_{j=1}^{J_n}  \sum_{a^n \in {\mathbf{A}}^n}$
$E(a^n|j)\mathrm{tr}(W_{t^n}(a^n)D_j)$.

When deterministic encoder is used, then
$P_e(\mathcal{C}, t^n)$ is defined as $1-$ $\frac{1}{J_n} \sum_{j=1}^{J_n} $
$\mathrm{tr}(W_{t^n}(c_j)D_j)$.
\end{remark}\vspace{0.15cm}

Now we will define some further coding schemes, where the sender and
the receiver use correlation as a resource. We will later show that
these coding schemes are very helpful  for the robust and
secure message transmission over an arbitrarily varying wiretap
channel.

\begin{definition}
Let $\mathbf{X}$ and $\mathbf{Y}$ be finite sets.
Let  $(X,Y)$  be a
random variable distributed according to a probability distribution
$p\in P(\mathbf{X}\times\mathbf{Y})$.

An $(X,Y)$-correlation-assisted
  $(n, J_n)$  code $\mathcal{C}(X,Y)$ for the
arbitrarily varying classical-quantum wiretap channel $\{(W_t,{V}_t): t \in \theta\}$
consists of a set of stochastic encoders $\left\{E_{\mathbf{x}^n}:
\{ 1,\ldots ,J_n\} \rightarrow P({\mathbf{A}}^n):
\mathbf{x}^n\in\mathbf{X}^n\right\}$, and
 a set of collections of positive semi-definite operators $\Bigl\{\{D_j^{(\mathbf{y}^n)}:
  j = 1,\ldots ,J_n\}:\mathbf{y}^n\in\mathbf{Y}^n\Bigr\}
$ on ${H}^{\otimes n}$ which fulfills $\sum_{j=1}^{J_n} D_j^{(\mathbf{y}^n)} =
\mathrm{id}_{{H}^{\otimes
n}}$ for every $\mathbf{y}^n\in\mathbf{Y}^n$.\end{definition}

\begin{definition}
Let $\mathbf{X}$ and $\mathbf{Y}$ be finite sets,
 and let  $(X,Y)$  be a
random variable distributed according to a joint probability distribution
$p\in P(\mathbf{X}\times\mathbf{Y})$.

 A nonnegative number $R$ is an achievable $m-a-(X,Y)$
   secrecy rate
 (message transmission under the average error criterion using
 $(X,Y)$-correlation-assisted
  $(n, J_n)$  codes) for the arbitrarily
varying  classical-quantum wiretap channel $\{(W_t,{V}_t): t \in \theta\}$ if for every
$\epsilon>0$, $\delta>0$, $\zeta>0$ and sufficiently large $n$ there
exists an $(X,Y)$-correlation-assisted   $(n, J_n)$  code
$\mathcal{C}(X,Y) =
\biggl\{\Bigl(E_{\mathbf{x}^n},\{D_j^{(\mathbf{y}^n)}: j\in\{
1,\ldots ,J_n\}\}\Bigr):  \mathbf{x}^n\in\mathbf{X}^n,\text{ }
\mathbf{y}^n\in\mathbf{Y}^n\biggr\}$  such that $\frac{\log J_n}{n}
> R-\delta$, and
\[\max_{t^n \in \theta^n} \sum_{\mathbf{x}^n\in\mathbf{X}^n}
 \sum_{\mathbf{y}^n\in\mathbf{Y}^n}p(\mathbf{x}^n,\mathbf{y}^n) P_e(\mathcal{C}(\mathbf{x}^n,\mathbf{y}^n), t^n) < \epsilon\text{ ,}\]
\[\max_{t^n\in\theta^n} \chi\left(R_{\mathrm{uni}};Z_{t^n,\mathbf{x}^n}\mid X\right) < \zeta\text{ ,}\]
where $P_e(\mathcal{C}(\mathbf{x}^n,\mathbf{y}^n), t^n)$  is defined as 
\[ P_e(\mathcal{C}(\mathbf{x}^n,\mathbf{y}^n), t^n) := 1- \frac{1}{J_n} \sum_{j=1}^{J_n} 
\mathrm{tr}(W_{t^n}(E_{\mathbf{x}^n}(~|j))D_j^{(\mathbf{y}^n)})\text{
,}\] 
\[\chi\left(R_{\mathrm{uni}};Z_{t^n,\mathbf{x}^n}\mid X\right):= \sum_{\mathbf{y}^n\in\mathbf{Y}^n}p(\mathbf{x}^n,\mathbf{y}^n)
\chi\left(R_{\mathrm{uni}};Z_{t^n,\mathbf{x}^n}\right)\text{ ,}\]
and $Z_{t^n,\mathbf{x}^n}=$ $\biggl\{{V}_{t^n}(E_{\mathbf{x}^n}(~|i)):$ $i\in\{1,\ldots,J_n\}\biggr\}$,
 $p(\mathbf{x}^n,\mathbf{y}^n)=\prod_{i=1}^n
p(\mathbf{x}_i,\mathbf{y}_i)$. Here we allowed $Z_{t^n,\mathbf{x}^n}$, the resulting quantum state
of the wiretapper, to be dependent on $\mathbf{x}^n$,
this means that we do not require $(X,Y)$ to be
secure against eavesdropping.
 \end{definition}

\begin{remark}
Her we follow \cite{Bo/No} and use the definition   ``$m-a-(X,Y)$
   secrecy rate'' because it is important to point out that  here the 
	average error criterion is used.
Please see \cite{Bo/No} for
more discussions on the value of message transmission under the average error criterion
and message transmission under the maximum error criterion.
\end{remark}\vspace{0.2cm}

 \begin{definition}Let  $\Bigl\{C^{\gamma}=\{(E^{\gamma},D_j^{\gamma}):j=1,\ldots,J_n\}:\gamma\in\Lambda\Bigr\}$
be the the set of $(n, J_n)$ deterministic  codes, labeled by a set $\Lambda$.

An  $(n, J_n)$  randomness-assisted quantum code for the
arbitrarily varying classical-quantum wiretap channel
$\{(W_t,{V}_t): t \in \theta\}$ is a distribution $G$ on
$\left(\Lambda,\sigma\right)$, where $\sigma$ is a sigma-algebra so
chosen such that the functions $\gamma \rightarrow P_e(\mathcal{C}^{\gamma},
t^n)$ and  $\gamma \rightarrow \chi
\left(R_{\mathrm{uni}};Z_{\mathcal{C}^{\gamma},t^n}\right)$ are both $G$-measurable
with respect to $\sigma$ for every $t^n\in\theta^n$, where for
$t^n\in\theta^n$ and $\mathcal{C}^{\gamma}=\{(w(j)^{n,\gamma},D_j^{\gamma}):
 j=1,\ldots,J_n\}$,
\[Z_{\mathcal{C}^{\gamma},t^n} :=\left\{{V}_{t^{n}}(w(1)^{n,\gamma}),
{V}_{t^{n}}(w(2)^{n,\gamma}),\ldots,
{V}_{t^{n}}(w(n)^{n,\gamma})\right\}\text{ .}\]
\label{randef}\end{definition}
\begin{remark}
 The randomness-assisted code technique is not to be confused with the random encoding  technique.
 For the  random encoding  technique, only the sender, but not the receiver, 
 randomly chooses a code word in ${\mathbf{A}}^n$ to encode a message $j$ according to a probability distribution.
 The receiver should be able to  decode $j$ even when he only knows the  probability distribution,
 but not which code word is actually chosen by the sender.
 For the randomness-assisted code technique,
 the sender randomly chooses a stochastic encoder $E^{\gamma}$ and the receiver
  chooses a set of the decoder operators $\{D_j^{\gamma'}:j=1,\ldots,J_n\}$.
 The receiver can decode the message if and only  if $\gamma=\gamma'$, i.e.,
 when he knows the sender's randomization.
\end{remark}

\begin{definition}
 Let $\Lambda$ and $\mathcal{C}^{\gamma}$, $\gamma\in \Lambda$, be defined as in Definition \ref{randef}.
 An  $(n, J_n)$  common randomness-assisted
quantum code for the arbitrarily varying classical-quantum wiretap
channel $\{(W_t,{V}_t): t \in \theta\}$ is is a 
 finite subset $\Bigl\{C^{\gamma}=\{(E^{\gamma},D_j^{\gamma}):j=1,\ldots,J_n\}:\gamma\in\Gamma\Bigr\}$
of the set of $(n, J_n)$ deterministic  codes, labeled by a  finite set $\Gamma$.

\end{definition}

\begin{definition}A nonnegative number $R$ is an achievable secrecy rate for the
arbitrarily varying classical-quantum wiretap channel
$\{(W_t,{V}_t): t \in \theta\}$  under randomness-assisted  coding if  for
every
 $\delta>0$, $\zeta>0$, and  $\epsilon>0$, if $n$ is sufficiently large,
there is an $(n, J_n)$  randomness-assisted
quantum code $(\{\mathcal{C}^{\gamma}:\gamma\in \Lambda\},G)$  such that
$\frac{\log J_n}{n} > R-\delta$, and
\[ \max_{t^n\in\theta^n}\int_{\Lambda}P_{e}(\mathcal{C}^{\gamma},t^n)d
G(\gamma) < \epsilon\text{ ,}\]
\[\max_{t^n\in\theta^n} \int_{\Lambda}
\chi\left(R_{\mathrm{uni}},Z_{\mathcal{C}^{\gamma},t^n}\right)dG(\gamma) < \zeta\text{
.}\] Here we allow $Z_{\mathcal{C}^{\gamma},t^n}$, the wiretapper's resulting quantum state, to be dependent on $\mathcal{C}^{\gamma}$. 
This means that we do not require  randomness  to be secure
against eavesdropping.\end{definition}

\begin{definition}A non-negative number $R$ is an achievable secrecy rate for the
arbitrarily varying classical-quantum wiretap channel
$\{(W_t,{V}_t): t \in \theta\}$  under common randomness-assisted  quantum coding if  for
every
 $\delta>0$, $\zeta>0$, and  $\epsilon>0$, if $n$ is sufficiently large,
there is an $(n, J_n)$  common randomness-assisted
quantum code $(\{\mathcal{C}^{\gamma}:\gamma\in \Gamma\})$  such that
$\frac{\log J_n}{n} > R-\delta$, and
\[ \max_{t^n\in\theta^n} \frac{1}{\left|\Gamma\right|} \sum_{\gamma=1}^{\left|\Gamma\right|}P_{e}(\mathcal{C}^{\gamma},t^n) < \epsilon\text{ ,}\]
\[\max_{t^n\in\theta^n} \chi\left(R_{\mathrm{uni}},Z_{\mathcal{C}^{\gamma},t^n}\mid \Gamma\right) < \zeta\text{ ,}\] 
where
\[\chi\left(R_{\mathrm{uni}},Z_{\mathcal{C}^{\gamma},t^n}\mid \Gamma\right):=\frac{1}{\left|\Gamma\right|} \sum_{\gamma=1}^{\left|\Gamma\right|}
\chi\left(R_{\mathrm{uni}},Z_{\mathcal{C}^{\gamma},t^n}\right)\text{ .}\]

This
means that we do not require the common randomness  to be secure
against eavesdropping.\end{definition}

We may consider the deterministic code, the
$(X,Y)$-correlation-assisted   code, the $((X,Y),r)$-correlation-assisted
 code, the $(X,Y)$-correlation-assisted  $(n, J_n)$
code, and   the common randomness-assisted quantum code
 as  special cases of the  randomness-assisted quantum code.
This means that  randomness is a stronger resource than both
  common randomness and  the $(X,Y)$-correlation, in the sense that
it requires more randomness than   common randomness and  the $(X,Y)$-correlation.
Randomness is therefore a more ``costly'' resource.

\begin{definition}
Let $\{(W_t,{V}_t): t \in \theta\}$ be an arbitrarily
varying classical-quantum wiretap channel.\\[0.2cm]
The  supremum of  all  achievable  (deterministic) secrecy rates of
$\{(W_t,{V}_t): t \in \theta\}$ is called the (deterministic)  secrecy
capacity of $\{(W_t,{V}_t): t \in \theta\}$, denoted by
$C_s(\{(W_t,{V}_t): t \in \theta\})$.\\[0.15cm]
 The  supremum of  all  achievable $m-a-(X,Y)$
   secrecy rates  of
$\{(W_t,{V}_t): t \in \theta\}$  is called the
 $m-a-(X,Y)$
      secrecy capacity, denoted by
$C_s(\{(W_t,{V}_t): t \in \theta\};corr(X,Y))$.\\[0.15cm]
  The
 supremum of  all    achievable secrecy rates   under  random-assisted  quantum
coding  of $\{(W_t,{V}_t): t \in \theta\}$ is called the
 random-assisted   secrecy capacity of
$\{(W_t,{V}_t): t \in \theta\}$, denoted by
$C_s(\{(W_t,{V}_t): t \in \theta\};r)$.\\[0.15cm]
  The  supremum of  all  achievable secrecy rates
under common randomness-assisted  quantum coding  of
$\{(W_t,{V}_t): t \in \theta\}$ is called the common randomness-assisted
   secrecy capacity of $\{(W_t,{V}_t): t \in \theta\}$,
denoted by $C_s(\{(W_t,{V}_t): t \in \theta\};cr)$.\end{definition}

For an arbitrarily varying classical-quantum wiretap channel
$\{(W_t,{V}_t): t \in \theta\}$ and random variable  $(X,Y)$
distributed on finite sets $\mathbf{X}$ and $\mathbf{Y}$, the
following facts are obvious and follow from the definitions.


\begin{align}&C_s(\{(W_t,{V}_t): t \in \theta\})\notag\\
&\leq C_s((W_t,{V}_t)_{t \in
\theta};corr(X,Y))\notag\\
&\leq C_s(\{(W_t,{V}_t): t \in \theta\};r)\text{ ,}\end{align}

\begin{equation}C_s(\{(W_t,{V}_t): t \in \theta\})\leq C_s((W_t,{V}_t)_{t \in
\theta};cr)\leq C_s(\{(W_t,{V}_t): t \in \theta\};r)\text{ .}\end{equation}\vspace{0.2cm}

\section{Ahlswede dichotomy for Arbitrarily Varying Classical-Quantum Wiretap Channels}\label{ADFAVCQWC}

In this section, we analyze the secrecy capacities of various coding schemes with resource
assistance. Our goal is to see what the
effects are on the secrecy capacities of an arbitrarily
varying classical-quantum wiretap channel if we use deterministic code, randomness-assisted code,  or
 common randomness-assisted code.

\begin{theorem}[Ahlswede dichotomy] Let $\{(W_t,{V}_t): t \in \theta\}$
 be an arbitrarily
varying classical-quantum wiretap channel.
\begin{enumerate}\item
\begin{enumerate}
\item If the
arbitrarily varying  classical-quantum channel $\{W_t : t \in \theta\}$
is not symmetrizable, then \begin{equation}
C_s(\{(W_t,{V}_t): t \in \theta\})=C_s(\{(W_t,{V}_t) : t \in \theta\};r)\text{ .}
\end{equation}\label{dichpart1a}
\item If $\{W_t : t \in \theta\}$ is symmetrizable,
 \begin{equation}
C_s(\{(W_t,{V}_t): t \in \theta\})=0\text{ .}
\end{equation} \label{dichpart1b}\end{enumerate}
\label{dichpart1}
\item
\begin{equation}
C_s(\{(W_t,{V}_t): t \in \theta\};cr)=C_s(\{(W_t,{V}_t): t \in
\theta\};r)\text{ .}
\end{equation}\label{dichpart2}\end{enumerate}
\label{dichpart}
\end{theorem}

\begin{proof}

Our proof is similar to the proof of Ahlswede dichotomy for  arbitrarily varying classical-quantum channels in
\cite{Ahl/Bli}. The different between our proof and the proofs in \cite{Ahl/Bli} is that we have to
additionally consider the security.

\subsection{Proof of Theorem \ref{dichpart}. \ref{dichpart2}}

 At first we use
random encoding technique to show the existence of a common randomness-assisted code.\vspace{0.2cm}

Choose arbitrary positive $\epsilon$ and $\zeta$.
Assume we have an   $(n, J_n)$  randomness-assisted
 code $(\{\mathcal{C}^{\gamma}:\gamma\in \Lambda\},G)$ for
$\{(W_t,{V}_t): t \in \theta\}$ such that
\[ \max_{t^n\in\theta^n}\int_{\Lambda}P_{e}(\mathcal{C}^{\gamma},t^n)d
G(\gamma) < \epsilon\text{ ,}\]
\[\max_{t^n\in\theta^n} \int_{\Lambda}
\chi\left(R_{\mathrm{uni}},Z_{\mathcal{C}^{\gamma},t^n}\right)dG(\gamma) < \zeta\text{
.}\]\vspace{0.15cm}

 Consider
now $n^3$ independent and identically distributed random variables
$\overline{\mathcal{C}}_1,\overline{\mathcal{C}}_2,\ldots,\overline{\mathcal{C}}_{n^3}$ with values in $\{\mathcal{C}^{\gamma}:\gamma\in
\Lambda\}$ such that $Pr(\overline{\mathcal{C}}_i=\mathcal{C})=G(\mathcal{C})$ for all $ \mathcal{C}
\in\{\mathcal{C}^{\gamma}:\gamma\in \Lambda\}$ and for all
$i\in\{1,\ldots,n^3\}$. For a fixed $t^n \in \theta^n$ we have
\begin{align}&P\left(\sum_{i=1}^{n^3}\chi\left(R_{\mathrm{uni}},Z_{\overline{\mathcal{C}}_i,t^n}\right)>n^3\lambda \right)\notag\\
&=P\left(\exp\left(\sum_{i=1}^{n^3} \frac{1}{n} 2 \chi\left(R_{\mathrm{uni}},Z_{\overline{\mathcal{C}}_i,t^n}\right)\right)>\exp( \frac{1}{n}2n^3\lambda )\right)\notag\\
&\leq\exp\left( -2n^2\lambda\right)\prod_{i=1}^{n^3}\mathbb{E}_G\exp\left(\frac{1}{n} 2\chi\left(R_{\mathrm{uni}},Z_{\overline{\mathcal{C}}_i,t^n}\right)\right)\notag\\
&=\exp\left( -2n^2\lambda\right)\mathbb{E}_G\exp\left(\sum_{i=1}^{n^3}\frac{1}{n} 2 \chi\left(R_{\mathrm{uni}},Z_{\overline{\mathcal{C}}_i,t^n}\right)\right)\notag\\
&\leq\exp\left( -2 n^2\lambda\right)\prod_{i=1}^{n^3}\mathbb{E}_G\left[1+\sum_{k=1}^{\infty} \frac{2^k \frac{1}{n}\chi\left(R_{\mathrm{uni}},Z_{\overline{\mathcal{C}}_i,t^n}\right)}{k!}\right]\notag\\
&=\exp\left( -2 n^2\lambda\right)\left[1+\sum_{k=1}^{\infty}  \frac{2^k \frac{1}{n}\mathbb{E}_G\chi\left(R_{\mathrm{uni}},Z_{\overline{\mathcal{C}}_i,t^n}\right)}{k!}\right]^{n^3}\notag\\
&\leq\exp\left( -2 n^2\lambda\right)\left[1+\sum_{k=1}^{\infty}  \frac{2^k\epsilon}{nk!}\right]^{n^3}\notag\\
&=\exp\left( -2 n^2\lambda\right)\left[1+\frac{1}{n}\epsilon\exp 2
\right]^{n^3}\text{ ,}\label{align1f}\end{align} the second
inequality holds because the right side  is part of the Taylor
series.\vspace{0.15cm}

We fix $n\in\mathbb{N}$ and define \[h_n(x):=n\log(1+\frac{1}{n}e^2x)-x\text{ .}\]
We have $h_n(0)=0$ and\begin{align*}&h_n'(x)\\
  &=n\frac{1}{1+\frac{1}{n}e^2x}\frac{1}{n}e^2-1\\
  &=\frac{ne^2}{e^2x+n}-1\text{ .}
 \end{align*}

$\frac{ne^2}{e^2x+n}-1$ is positive if $x< \frac{e^2-1}{e}n$, thus
if $\hat{c}< \frac{e^2-1}{e}n$, $h_n(x)$ is  strictly monotonically increasing
in the interval $]0,\hat{c}]$. Thus  $h_n(x)$ is positive for $0<x\leq \hat{c}$.
For every positive $\hat{c}$, $\hat{c}< \frac{e^2-1}{e}n$ holds if
$n> \frac{e}{e^2-1}\hat{c}$.  Thus for any positive $\epsilon$, $\epsilon\leq
n\log(1+\frac{1}{n}\epsilon\exp 2)$ if $n$
is large enough.
Choose $\lambda\leq\epsilon$  and let $n$ be sufficiently large,
we have $\lambda\leq n\log(1+\frac{1}{n}\epsilon\exp 2)$, therefore
\begin{align}&\exp\left( -2 \lambda n^2\right)\left[1+\frac{1}{n}\epsilon\exp 2 \right]^{n^3}\notag\\
 &=\exp\left( - \lambda n^2\right)\exp\left(n^2( - \lambda + n\log(1+\frac{1}{n}\epsilon\exp 2 ))\right)\notag\\
 &\leq\exp\left( - \lambda n^2\right)\text{ .}\label{litalign}
\end{align}

By (\ref{align1f}) and  (\ref{litalign})
\begin{align}&P\left(\sum_{i=1}^{n^3} \chi\left(R_{\mathrm{uni}},Z_{\overline{\mathcal{C}}_i,t^n}\right)>\lambda n^3\text{ }\forall t^n \in \theta^n\right)\notag\\
&<|\theta|^n\exp(-\lambda n^2)\notag\\
&=\exp(n\log |\theta|-\lambda n^2)\notag\\
&=\exp(-n\lambda )  \text{ .}\end{align} \vspace{0.15cm}

When $P_{er} < \zeta$ holds, in a similar way as $(\ref{align1f})$,
choose $\lambda\leq\zeta$, we can show that
\begin{equation}P\left(\sum_{i=1}^{n^3} P_e(\overline{\mathcal{C}}_i,t^n)>\lambda n^3\text{ }\forall t^n\in\theta^n\right)<e^{-\lambda n}
\text{ .}\label{ln2f}\end{equation}   \vspace{0.15cm}

Let $\lambda:=\min\{\epsilon, \zeta\}$, we have \begin{align*}&P\left(\sum_{i=1}^{n^3}
P_e(\overline{\mathcal{C}}_i,t^n)>\lambda n^3 \text{ or } \sum_{i=1}^{n^3}
 \chi\left(R_{\mathrm{uni}},Z_{\overline{\mathcal{C}}_i,t^n}\right)>\lambda n^3\text{ }\forall t^n\in\theta^n\right) \leq    2 e^{-\lambda n^3}  \text{ .} \end{align*}

We denote the event \begin{align*}&\mathbf{E}_n :=\biggl\{\mathcal{C}_1,\mathcal{C}_2,\ldots,\mathcal{C}_{n^3}\in
\mathcal{C}_{\nu}' :
\frac{1}{n^3}\sum_{i=1}^{n^3} P_e(\mathcal{C}_i,t^n)\leq\lambda\\
 &\text{ and }
\frac{1}{n^3}\sum_{i=1}^{n^3}
\chi\left(R_{\mathrm{uni}},Z_{\mathcal{C}_i,t^n}\right)\leq\lambda\biggr\}\text{ .} \end{align*}
If $n$ is large enough, then $P(\mathbf{E}_n)$ is positive. This
means $\mathbf{E}_n$ is not the
empty set, since $P(\emptyset)=0$ by  definition.
Thus there exist  codes 
$\mathcal{C}_i=
\left(E_{i}^{n},\left\{D_{j,i}^{n}: j =1, \ldots,
J_n\right\}\right)$ $\in
\mathcal{C}_{\nu}'$ for $i\in\{1,\ldots,n^3\}$ with a positive probability
such that
\begin{equation} \frac{1}{n^3}\sum_{i=1}^{n^3} P_e(\mathcal{C}_i,t^n)<\lambda \text{ and }
\frac{1}{n^3}\sum_{i=1}^{n^3}
\chi\left(R_{\mathrm{uni}},Z_{\mathcal{C}_i,t^n}\right)\leq\lambda\text{
.}\label{n2'f}\end{equation}

By (\ref{n2'f}), for any $n\in \mathbb{N}$ and positive $\lambda$, if there is an   $(n, J_n)$  randomness-assisted
 code $(\{\mathcal{C}^{\gamma}:\gamma\in \Lambda\},G)$ for
$\{(W_t,{V}_t): t \in \theta\}$ such that
\[ \max_{t^n\in\theta^n}\int_{\Lambda}P_{e}(\mathcal{C}^{\gamma},t^n)d
G(\gamma) < \lambda\text{ ,}\]
\[\max_{t^n\in\theta^n} \int_{\Lambda}
\chi\left(R_{\mathrm{uni}},Z_{\mathcal{C}^{\gamma},t^n}\right)dG(\gamma) < \lambda\text{
,}\]
there is also an   $(n, J_n)$ common randomness-assisted
 code $\left\{\mathcal{C}_1,\mathcal{C}_2,\ldots,\mathcal{C}_{n^3}\right\}$ such that
\[ \max_{t^n\in\theta^n}\frac{1}{n^3}\sum_{i=1}^{n^3}P_{e}(\mathcal{C}_{i},t^n)< \lambda\text{ ,}\]
\[\max_{t^n\in\theta^n} \frac{1}{n^3}\sum_{i=1}^{n^3}
\chi\left(R_{\mathrm{uni}},Z_{\mathcal{C}_{i},t^n}\right) < \lambda\text{
.}\]
Therefore we have
\[C_s(\{(W_t,{V}_t): t \in \theta\};cr)\geq C_s(\{(W_t,{V}_t): t \in \theta\};r) \text{ .}\]
This and the fact that
\[C_s(\{(W_t,{V}_t): t \in \theta\};cr)\leq C_s(\{(W_t,{V}_t): t \in \theta\};r) \text{ ,}\]
prove  Theorem \ref{dichpart}. \ref{dichpart2}.   \vspace{0.3cm}

\subsection{Proof of  Theorem \ref{dichpart}.  \ref{dichpart1a}}

Now we are going to use Theorem \ref{dichpart}. \ref{dichpart2} to
 prove  Theorem \ref{dichpart}. \ref{dichpart1a}.

To show the lower bound in 
 Theorem \ref{dichpart}. \ref{dichpart1a},  we build a two-part code word, which consists of a non-secure code word
 and a   common randomness-assisted secure code word. The non-secure  one is used to
create the common randomness for the sender and the legal receiver.
The  common randomness-assisted secure code word is used to transmit the message to the legal
receiver.
\vspace{0.2cm}

Choose arbitrary positive $\epsilon$ and $\zeta$.
Assume we have an   $(n, J_n)$  randomness-assisted
 code $(\{\mathcal{C}^{\gamma}:\gamma\in \Lambda\},G)$ for
$\{(W_t,{V}_t): t \in \theta\}$ such that
\[ \max_{t^n\in\theta^n}\int_{\Lambda}P_{e}(\mathcal{C}^{\gamma},t^n)d
G(\gamma) < \epsilon\text{ ,}\]
\[\max_{t^n\in\theta^n} \int_{\Lambda}
\chi\left(R_{\mathrm{uni}},Z_{\mathcal{C}^{\gamma},t^n}\right)dG(\gamma) < \zeta\text{
,}\] by  Theorem \ref{dichpart}. \ref{dichpart2},
there is also an   $(n, J_n)$ common randomness-assisted
 code $\left\{\mathcal{C}_1,\mathcal{C}_2,\ldots,\mathcal{C}_{n^3}\right\}$ such that
\begin{equation} \max_{t^n\in\theta^n}\frac{1}{n^3}\sum_{i=1}^{n^3}P_{e}(\mathcal{C}_{i},t^n)<
\lambda\text{ ,}\label{anf1}\end{equation}
\begin{equation}\max_{t^n\in\theta^n} \frac{1}{n^3}\sum_{i=1}^{n^3}
\chi\left(R_{\mathrm{uni}},Z_{\mathcal{C}_{i},t^n}\right) < \lambda\text{
,}\label{anf2}\end{equation} where $\lambda:=\min\{\epsilon, \zeta\}$.\vspace{0.2cm}

If the
arbitrarily varying  classical-quantum channel $\{W_{t}:
x\in\mathcal{X}\}$ is not symmetrizable, then by \cite{Ahl/Bli},
the capacity for message transmission of $\{W_{t}:
x\in\mathcal{X}\}$ is positive. By Remark \ref{detvsran}
we may assume that the capacity for message transmission of $\{W_{t}:
x\in\mathcal{X}\}$ using deterministic encoder is positive.
This means
for any positive $\vartheta$,
if $n$ is sufficiently large, there is a code $\biggl(\Bigl(c^{\mu(n)}_i\Bigr)_{i\in\{1,\ldots,n^3\}},\{D_i^{\mu(n)}:
i\in\{1,\ldots,n^3\}\}\biggr)$ with deterministic encoder of length $\mu(n)$, where $2^{\mu(n)}=o(n)$
such that\begin{equation}1- \frac{1}{n^3} \sum_{i=1}^{n^3}
\mathrm{tr}(W_{t^n}(c^{\mu(n)}_i)D_i^{\mu(n)})\leq \vartheta\text{ .}\label{anf3}\end{equation}



We now can construct a code
$\mathcal{C}^{\mathrm{det}} $ $=$ $\biggl(E^{\mu(n)+n},\Bigl\{D_{j}^{\mu(n)+n} :
j=1,\ldots,J_n\Bigr\}\biggr)$, where for $a^{\mu(n)
+n} = (a^{\mu(n)},a^n)\in{\mathbf{A}}^{\mu(n)+n}$ \[E^{\mu(n)+n}(a^{\mu(n)
+n}|j)=\begin{cases}
  \frac{1}{n^3}E^{n}_{i}(a^{n}|j) \text{ if } a^{\mu(n)} = c^{\mu(n)}_i\\
    0 \text{ else}             \end{cases}\text{ ,}
\] and \[D_{j}^{\mu(n)+n} := \sum_{i=1}^{n^3} D_i^{\mu(n)}\otimes D_{i,j}^{n} \text{ .}
\]
It is a composition of  the  code $\Bigl(c^{\mu(n)}_i\Bigr)_{i=1,\ldots,n^3},
\{D_i^{\mu(n)}: i=1,\ldots,n^3\})$ and the  code
$\mathcal{C}_i=(E_{i}^{n}, \{D_{i,j}^{n}:j=1,\ldots,J_n\}$. This is a  code
of length $\mu(n) +n$.

\subsubsection{This code is secure against eavesdropping}

We are going to show that the two-part code word is secure when the
common randomness-assisted part is secure. Since the two-part code
can be seen as a function of its common randomness-assisted part
the idea is similar to applying the quantum 
data processing inequality (cf. \cite{Wil}) when
we consider quantum mutual information as security criterion. \vspace{0.2cm}

For any $i\in \{1,\ldots ,n^3\}$ let
\[\mathfrak{Z}_{i,t^{\mu(n)+n}} :=\biggl\{{V}_{t^{\mu(n)}}(c^{\mu(n)}_i)\otimes{V}_{t^{n}}
(  E_{i}^{n}(~\mid 1)),\ldots, {V}_{t^{\mu(n)}}(c^{\mu(n)}_i)\otimes{V}_{t^{n}}(E_{i}^{n}(~\mid J_n))\biggr\}\text{ .}\]
For any
$t^{\mu(n) +n}= (t^{\mu(n) },t^{n})$  we have
\begin{align}& \chi\left(R_{\mathrm{uni}},\mathfrak{Z}_{i,t^{\mu(n)+n}}\right)\notag\\
&= S\left(\frac{1}{J_n}
 \sum_{j=1}^{J_n}\sum_{a^n\in {\mathbf{A}}^n} E_{i}^{n}(a^n\mid j){V}_{t^{\mu(n)}}(c^{\mu(n)}_i)\otimes{V}_{t^{n}}( a^n)\right)\notag\\
&-\frac{1}{J_n}
 \sum_{j=1}^{J_n}S\left(\sum_{a^n\in {\mathbf{A}}^n} E_{i}^{n}(a^n\mid j){V}_{t^{\mu(n)}}(c^{\mu(n)}_i)\otimes{V}_{t^{n}}( a^n)\right)\notag\\
&= S\left( {V}_{t^{\mu(n)}}(c^{\mu(n)}_i)\right)+ S\left(\frac{1}{J_n}
\sum_{j=1}^{J_n}\sum_{a^n\in {\mathbf{A}}^n} E_{i}^{n}(a^n\mid j){V}_{t^n}(a^n)\right)-S\left( {V}_{t^{\mu(n)}}(c^{\mu(n)}_i)\right)\notag\\
&- \frac{1}{J_n}
\sum_{j=1}^{J_n}S\left(\sum_{a^n\in {\mathbf{A}}^n} E_{i}^{n}(a^n\mid j){V}_{t^n}(a^n)\right)\notag\\
&=S\left(\frac{1}{J_n}
\sum_{j=1}^{J_n} {V}_{t^n}(E_{i}^{n}(~\mid j))\right)-\frac{1}{J_n}
\sum_{j=1}^{J_n} S\left({V}_{t^n}(E_{i}^{n}(~\mid j))\right)\notag\\
 &= \chi\left(R_{\mathrm{uni}},Z_{\mathcal{C}_i,t^n}\right)\text{ .}\label{einfach}\end{align}

By definition,  we have\begin{align*}&Z_{\mathcal{C}^{\mathrm{det}},t^{\mu(n) +n}} \\
&= \biggl\{{V}_{t^{\mu(n) +n}}( E^{\mu(n)+n}(~\mid 1)),\ldots,
{V}_{t^{\mu(n) +n}}( E^{\mu(n)+n}(~\mid J_n))\biggr\}\\
&=\biggl\{\frac{1}{n^3}
\sum_{i=1}^{n^3}\sum_{a^n\in {\mathbf{A}}^n} E_{i}^{n}(a^n\mid 1){V}_{t^{\mu(n) +n}}\Bigl(\left(c^{\mu(n)}_i,
a^n\right)\Bigr),\ldots,\\
& \frac{1}{n^3}
\sum_{i=1}^{n^3}\sum_{a^n\in {\mathbf{A}}^n} E_{i}^{n}(a^n\mid J_n){V}_{t^{\mu(n) +n}}\Bigl(\left(c^{\mu(n)}_i,
a^n\right)\Bigr)\biggr\}\\
&= \biggl\{  \frac{1}{n^3}\sum_{i=1}^{n^3}\sum_{a^n\in {\mathbf{A}}^n} E_{i}^{n}(a^n\mid 1){V}_{t^{\mu(n) }}(c^{\mu(n)}_i)\otimes
 {V}_{t^{n}}(a^n)   ,\ldots,\\
& \frac{1}{n^3}\sum_{i=1}^{n^3}\sum_{a^n\in {\mathbf{A}}^n} E_{i}^{n}(a^n\mid J_n){V}_{t^{\mu(n) }}(c^{\mu(n)}_i)\otimes
 {V}_{t^{n}}(a^n) \biggr\}\\
&=\biggl\{  \frac{1}{n^3}\sum_{i=1}^{n^3}{V}_{t^{\mu(n) }}(c^{\mu(n)}_i)\otimes
 {V}_{t^{n}}( E_{i}^{n}(~\mid 1))     ,\ldots,\\
& \frac{1}{n^3}\sum_{i=1}^{n^3} {V}_{t^{\mu(n) }}(c^{\mu(n)}_i)\otimes
 {V}_{t^{n}}(E_{i}^{n}(~\mid J_n))    \biggr\}\text{ .}\end{align*}


By (\ref{anf2}) and (\ref{einfach})  for any
$t^{\mu(n) +n}= (t^{\mu(n) }t^{n})$  we have

\begin{align}& \chi\left(R_{\mathrm{uni}},Z_{\mathcal{C}^{\mathrm{det}},t^{\mu(n) +n}}\right)\notag\\
&\leq
\chi\left(R_{\mathrm{uni}},Z_{\mathcal{C}^{\mathrm{det}},t^{\mu(n) +n}}\right)-\frac{1}{n^3}\sum_{i=1}^{n^3}
\chi\left(R_{\mathrm{uni}},Z_{\mathcal{C}_i,t^n}\right)+\lambda \notag\\
&=
\chi\left(R_{\mathrm{uni}},Z_{\mathcal{C}^{\mathrm{det}},t^{\mu(n) +n}}\right)-\frac{1}{n^3}\sum_{i=1}^{n^3}
\chi\left(R_{\mathrm{uni}},\mathfrak{Z}_{i,t^{\mu(n)+n}}\right)+\lambda \notag\\
 &=S\left(\frac{1}{J_n}\frac{1}{n^3}
 \sum_{j=1}^{J_n}\sum_{i=1}^{n^3}{V}_{t^{\mu(n)}}(c^{\mu(n)}_i)\otimes  {V}_{t^{n}}(E_{i}^{n}(~\mid j))\right)\notag\\
&-\frac{1}{J_n}
 \sum_{j=1}^{J_n}S\left(\frac{1}{n^3}\sum_{i=1}^{n^3}{V}_{t^{\mu(n) }}(c^{\mu(n)}_i)\otimes
 {V}_{t^{n}}(E_{i}^{n}(~\mid j))\right)\notag\\
&-\frac{1}{n^3}\sum_{i=1}^{n^3}S\left(\frac{1}{J_n}
 \sum_{j=1}^{J_n}{V}_{t^{\mu(n) }}(c^{\mu(n)}_i)\otimes
 {V}_{t^{n}}(E_{i}^{n}(~\mid j))\right)\notag\\
&+\frac{1}{J_n}\frac{1}{n^3}
 \sum_{j=1}^{J_n}\sum_{i=1}^{n^3}S\left({V}_{t^{\mu(n)}}(c^{\mu(n)}_i)\otimes
 {V}_{t^{n}}(E_{i}^{n}(~\mid j))\right)+\lambda\text{ .}\label{secudet1}\end{align} \vspace{0.2cm}

 Let $H^{\mathfrak{H}}$ be a $n^3$-dimensional Hilbert space,
spanned by an orthonormal basis $\{|i\rangle : i = 1, \ldots, n^3\}$. 
Let $H^{\mathfrak{J}}$ be a $J_n$ dimensional Hilbert space, spanned by an orthonormal basis 
$\{|j\rangle : j = 1, \ldots, J_n\}$. 
We define
\[\varphi^{\mathfrak{J}\mathfrak{H}H^{\mu(n)+n}}:=\frac{1}{J_n}\frac{1}{n^3}\sum_{j=1}^{J_n}\sum_{i=1}^{n^3}
|j\rangle\langle j|\otimes|i\rangle\langle i|\otimes
{V}_{t^{\mu(n)}}(c^{\mu(n)}_i)\otimes {V}_{t^{ n}}(E_{i}^{n}(~\mid j))\text{ .}\]

We have
\begin{align*}&\varphi^{\mathfrak{J}H^{\mu(n)+n}}=\mathrm{tr}_{\mathfrak{H}}\left(\varphi^{\mathfrak{J}\mathfrak{H}H^{\mu(n)+n}} \right)=\frac{1}{J_n}\frac{1}{n^3}\sum_{j=1}^{J_n}\sum_{i=1}^{n^3}|j\rangle\langle j|\otimes
{V}_{t^{\mu(n)}}(c^{\mu(n)}_i)\otimes {V}_{t^{ n}}(E_{i}^{n}(~\mid j))\text{ ,}\end{align*}
\begin{align*}&\varphi^{\mathfrak{H}H^{\mu(n)+n}}=\mathrm{tr}_{\mathfrak{J}}\left(\varphi^{\mathfrak{J}\mathfrak{H}H^{\mu(n)+n}} \right)=\frac{1}{J_n}\frac{1}{n^3}\sum_{j=1}^{J_n}\sum_{i=1}^{n^3} |i\rangle\langle i|\otimes
{V}_{t^{\mu(n)}}(c^{\mu(n)}_i)\otimes {V}_{t^{ n}}(E_{i}^{n}(~\mid j))\text{ ,}\end{align*}

\begin{align*}&\varphi^{H^{\mu(n)+n}}=\mathrm{tr}_{\mathfrak{J}\mathfrak{H}}\left(\varphi^{\mathfrak{J}\mathfrak{H}H^{\mu(n)+n}} \right)=\frac{1}{J_n}\frac{1}{n^3}\sum_{j=1}^{J_n}\sum_{i=1}^{n^3}
{V}_{t^{\mu(n)}}(c^{\mu(n)}_i)\otimes {V}_{t^{ n}}(E_{i}^{n}(~\mid j))\text{ .}\end{align*}

Furthermore
\begin{align*}&S(\varphi^{\mathfrak{J}H^{\mu(n)+n}})\\
&=S\left(\frac{1}{J_n}\sum_{j=1}^{J_n}\sum_{i=1}^{n^3}|j\rangle\langle j|\otimes
{V}_{t^{\mu(n)}}(c^{\mu(n)}_i)\otimes {V}_{t^{ n}}(E_{i}^{n}(~\mid j))\right)\\
& = H(R_{\mathrm{uni}})+ \frac{1}{J_n}\sum_{j=1}^{J_n}S\left(\frac{1}{n^3}\sum_{i=1}^{n^3}
{V}_{t^{\mu(n)}}(c^{\mu(n)}_i)\otimes {V}_{t^{ n}}(E_{i}^{n}(~\mid j))\right)\text{ ,}\end{align*}

\begin{align*}&S(\varphi^{\mathfrak{H}H^{\mu(n)+n}})\\
&=S\left(\frac{1}{J_n}\frac{1}{n^3}\sum_{j=1}^{J_n}\sum_{i=1}^{n^3} |i\rangle\langle i|\otimes
{V}_{t^{\mu(n)}}(c^{\mu(n)}_i)\otimes {V}_{t^{ n}}(E_{i}^{n}(~\mid j))\right)\\
& =H(Y_{\mathrm{uni}})+ \frac{1}{n^3}\sum_{i=1}^{n^3}S\left(\frac{1}{J_n}\sum_{j=1}^{J_n}
{V}_{t^{\mu(n)}}(c^{\mu(n)}_i)\otimes {V}_{t^{ n}}(E_{i}^{n}(~\mid j))\right)\text{ ,}\end{align*}

\begin{align*}&S(\varphi^{\mathfrak{J}\mathfrak{H}H^{\mu(n)+n}})\\
&=S\left(\frac{1}{J_n}\frac{1}{n^3}\sum_{j=1}^{J_n}\sum_{i=1}^{n^3}|j\rangle\langle j|\otimes|i\rangle\langle i|\otimes
{V}_{t^{\mu(n)}}(c^{\mu(n)}_i)\otimes {V}_{t^{ n}}(E_{i}^{n}(~\mid j))\right)\\
& = H(R_{\mathrm{uni}})+H(Y_{\mathrm{uni}})+ \frac{1}{J_n}\frac{1}{n^3}\sum_{j=1}^{J_n}\sum_{i=1}^{n^3}S\left(
{V}_{t^{\mu(n)}}(c^{\mu(n)}_i)\otimes {V}_{t^{ n}}(E_{i}^{n}(~\mid j))\right)\text{ ,}\end{align*}\vspace{0.15cm}

By strong subadditivity of von Neumann entropy it holds $S(\varphi^{\mathfrak{J}H^{\mu(n)+n}}) + S(\varphi^{\mathfrak{H}H^{\mu(n)+n}})$
$\geq$ $S(\varphi^{H^{\mu(n)+n}})+S(\varphi^{\mathfrak{J}\mathfrak{H}H^{\mu(n)+n}})$. Thus by (\ref{secudet1}) we have

\begin{equation}\chi\left(R_{\mathrm{uni}},Z_{\mathcal{C}^{\mathrm{det}},t^{\mu(n) +n}}\right) \leq \lambda\text{ .}\label{secudet}\end{equation} \vspace{0.2cm}

\subsubsection{The legal receiver is able to decode the message}

We now use Theorem \ref{dichpart}. \ref{dichpart2} to show that the legal receiver's average error goes to
zero.\vspace{0.15cm}

For any $t^{\mu(n) +n}\in\theta^{\mu(n) +n}$, by (\ref{anf1}) and (\ref{anf2}),
\begin{align}&  P_e(\mathcal{C}^{\mathrm{det}}, t^{\mu(n) +n})\notag\\
&=1-
\frac{1}{J_n}\sum_{j=1}^{J_n}\mathrm{tr}\biggl(\left[\frac{1}{n^3}\sum_{i=1}^{n^3}
{U}_{t^{\mu(n)}}(c^{\mu(n)}_i)\otimes {U}_{t^{n}}(E_{i}^{n}(~\mid j))\right]\cdot\left[\sum_{k=1}^{n^3} D_k^{\mu(n)}\otimes
 D_{k,j}^{n}\right]
\biggr)\notag\\
&\leq 1-
\frac{1}{J_n}\sum_{j=1}^{J_n}\mathrm{tr}\biggl(\frac{1}{n^3}\sum_{i=1}^{n^3}\left[
{U}_{t^{\mu(n)}}(c^{\mu(n)}_i)\otimes {U}_{t^{n}}(E_{i}^{n}(~\mid j))\right]\cdot\left[ D_k^{\mu(n)}\otimes
 D_{k,j}^{n}\right]
\biggr)\notag\\
&= 1-
\frac{1}{J_n}\sum_{j=1}^{J_n}\mathrm{tr}\biggl(\frac{1}{n^3}\sum_{i=1}^{n^3}\left[{U}_{t^{\mu(n)}}(c^{\mu(n)}_i) D_k^{\mu(n)}
\right]\otimes\left[{U}_{t^{n}}(E_{i}^{n}(~\mid j))
 D_{k,j}^{n}\right]
\biggr)\notag\\
&= 1-\frac{1}{n^3}\sum_{i=1}^{n^3}\biggl(\mathrm{tr}\left[{U}_{t^{\mu(n)}}(c^{\mu(n)}_i) D_k^{\mu(n)}
\right]\cdot
\frac{1}{J_n}\sum_{j=1}^{J_n}\mathrm{tr}\left[{U}_{t^{n}}(E_{i}^{n}(~\mid j))
 D_{k,j}^{n}\right]\biggr)\notag\\
&= 1-\frac{1}{n^3}\sum_{i=1}^{n^3}\biggl(\mathrm{tr}\left[{U}_{t^{\mu(n)}}(c^{\mu(n)}_i) D_k^{\mu(n)}
\right]\cdot(1- P_e(C_i,t^n))\biggr)\notag\\
&\leq 1-(1-\vartheta -\lambda)\notag\\
&= \lambda+\vartheta\text{ ,}\label{errordet}\end{align}
the second inequality holds because for non-negative numbers
$\{\alpha_i,\beta_i:i=1,\ldots,M\}$ such that $\frac{1}{M}\sum_{i=1}^{M} \alpha_i$ $\leq$ $\vartheta$
and  $\frac{1}{M}\sum_{i=1}^{M} \beta_i$ $\leq$ $\lambda$ we have $\frac{1}{M}\sum_{i=1}^{M} (1-\alpha_i)(1-\beta_i)$ 
$\geq$ $1-\vartheta-\lambda$.
\vspace{0.2cm}

For any $n\in \mathbb{N}$ and positive $\lambda$, if there is an   $(n, J_n)$  randomness-assisted
 code $(\{\mathcal{C}^{\gamma}:\gamma\in \Lambda\},G)$ for
$\{(W_t,{V}_t): t \in \theta\}$ such that
\[ \max_{t^n\in\theta^n}\int_{\Lambda}P_{e}(\mathcal{C}^{\gamma},t^n)d
G(\gamma) < \epsilon \text{ ,}\]
\[\max_{t^n\in\theta^n} \int_{\Lambda}
\chi\left(R_{\mathrm{uni}},Z_{\mathcal{C}^{\gamma},t^n}\right)dG(\gamma) < \zeta\text{
,}\] choose $\delta  = \min \{\epsilon, \zeta  \} + \vartheta$,  by (\ref{errordet}) and  (\ref{secudet}),
we can find a $(\mu(n) +n, J_n)$
deterministic code $\mathcal{C}^{\mathrm{det}}=\biggl(E^{\mu(n)+n},\{D_{j}^{\mu(n)+n} :
j=1,\ldots,J_n\}\biggr)$ such that
such that
\[ \max_{t^{\mu(n) +n}\in\theta^{\mu(n) +n}}P_{e}(\mathcal{C}^{\mathrm{det}},t^{\mu(n) +n})< \lambda\text{ ,}\]
\[\max_{t^{\mu(n) +n}\in\theta^{\mu(n) +n}}
\chi\left(R_{\mathrm{uni}},Z_{\mathcal{C}^{\mathrm{det}},t^{\mu(n) +n}}\right) < \lambda\text{
.}\]

We know that $2^{\mu(n)}=o(n)$.
For any positive $\varepsilon$, if $n$ is large enough we have $\frac{1}{n}\log J_n -\frac{1}{\log n +n}\log J_n
\leq \varepsilon$.
Therefore, if the
arbitrarily varying classical-quantum  channel $\{W_{t}:
x\in\mathcal{X}\}$ is  not symmetrizable, we have
\begin{equation} C_s(\{(W_t,{V}_t): t \in \theta\};cr)\geq C_s(\{(W_t,{V}_t): t \in \theta\};r)- \varepsilon\text{ .}\label{varepsilon}\end{equation}
This and the fact that
\[C_s(\{(W_t,{V}_t): t \in \theta\};cr)\leq C_s(\{(W_t,{V}_t): t \in \theta\};r)\]
prove  Theorem \ref{dichpart}. \ref{dichpart1a}  (c.f. \cite{Ahl/Bli} for Ahlswede
dichotomy for arbitrarily varying classical-quantum channel
Channels). \vspace{0.2cm}

 \subsection{The proof of Theorem \ref{dichpart}. \ref{dichpart1b}}

 If  $\{W_t : t \in \theta\}$ is
symmetrizable,  the  deterministic capacity of
 $\{W_t : t \in \theta\}$ using a deterministic encoder is equal to zero by \cite{Ahl/Bli}. Now we
 have to check whether $C_s(\{(W_t,{V}_t): t \in \theta\})$ using stochastic encoder remains
  equal to zero. The proof is rather standard. Readers with
	experiences in information theory may pass over this subsection. \vspace{0.2cm}

For any $n\in\mathbb{N}$ and $J_n\in\mathbb{N}\setminus\{1\}$,
let $\mathcal{C} = \Bigl(E^n,\{D_j^n : j\in\{ 1,\ldots J_n\}\}\Bigr)$ be an  $(n, J_n)$  deterministic
code with a random encoder. We denote the set of all   deterministic encoders by 
$\mathbf{F}_n:=\Bigl\{f_n:\{1,\ldots,J_n\}\rightarrow {\mathbf{A}}^n \Bigr\}$.
 Since  the  deterministic capacity of
 $\{W_t : t \in \theta\}$ using deterministic encoder is zero, there is a positive $c$
such that for any $n\in\mathbb{N}$ we have
\begin{equation} \max_{t^n\in\theta^n}\frac{1}{J_n}\sum_{j=1}^{J_n}
\mathrm{tr}\biggl(W_{t^n}(f_n(j))D_j^n\biggr)<1-c\text{ .}\label{foranyn}\end{equation}
For any $t^n\in\theta^n$, we have
\begin{align}&1-c \notag\\
&=(1-c) \sum_{f_n\in\mathbf{F}_n}\prod_{k=1}^{J_n}E^n(f_n(k)\mid k) \notag\\
&>\sum_{f_n\in\mathbf{F}_n}\prod_{k=1}^{J_n}E^n(f_n(k)\mid k)
\frac{1}{J_n}\sum_{j=1}^{J_n} \mathrm{tr}\biggl(W_{t^n}(f_n(j))D_j^n\biggr)\notag\\
&=\frac{1}{J_n}\sum_{j=1}^{J_n}\sum_{a^n\in {\mathbf{A}}^n}  E^n(a^n\mid j)\mathrm{tr}\biggl(W_{t^n}(a^n)D_j^n\biggr)\notag\\
&=\frac{1}{J_n}\sum_{j=1}^{J_n}\mathrm{tr}\biggl( W_{t^n}( E^{n}(~\mid j))D_j^n\biggr)
\text{ ,}\label{detrand}\end{align}
the first equation holds because
\begin{align*}\allowdisplaybreaks[2]&\sum_{f_n\in\mathbf{F}_n}\prod_{j=1}^{J_n}E^n(f_n(j)\mid j)\\
&=\sum_{a^n}\sum_{f_n(1)=a^n}\biggl(\sum_{a^n}\sum_{f_n(2)=a^n}\biggl(\ldots
\biggl(\sum_{a^n}\sum_{f_n(J_n-1)=a^n}\\
&\biggl(\sum_{a^n}\sum_{f_n(J_n)=a^n}
\prod_{j=1}^{J_n}E^n(f_n(j)\mid j) \biggr)\biggr)\ldots\biggr)\biggr)\\
&=\sum_{a^n}\sum_{f_n(1)=a^n}\biggl(\sum_{a^n}\sum_{f_n(2)=a^n}\biggl(\ldots
\biggl(\sum_{a^n}\sum_{f_n(J_n-1)=a^n}\\
&\biggl(\sum_{a^n}
E^n(a^n\mid J_n)\prod_{j=1}^{J_n-1}E^n(f_n(j)\mid j) \biggr)\biggr)\ldots\biggr)\biggr)\\
&=\sum_{a^n}\sum_{f_n(1)=a^n}\left(\sum_{a^n}\sum_{f_n(2)=a^n}\left(\ldots\left(\sum_{a^n}\sum_{f_n(J_n-1)=a^n}
\prod_{j=1}^{J_n-1}E^n(f_n(j)\mid j) \right)\ldots\right)\right)\\
&=\sum_{a^n}\sum_{f_n(1)=a^n}\left(\sum_{a^n}\sum_{f_n(2)=a^n}\left(\ldots\left(\sum_{a^n}
E^n(a^n\mid J_n)\prod_{j=1}^{J_n-1}E^n(f_n(j)\mid j) \right)\ldots\right)\right)\\
&=\ldots\\
&=\sum_{a^n}\sum_{f_n(1)=a^n}E^n(f_n(1)\mid 1)\\
&=\sum_{a^n}E^n(a^n\mid 1)\\
&=1\text{ ,}\end{align*}
the second equation holds because
for any $j\in\{1,\ldots,J_n\}$, we have
\begin{align*}\allowdisplaybreaks[2]&
\sum_{f_n\in\mathbf{F}_n}\prod_{k=1}^{J_n}E^n(f_n(k)\mid k)\mathrm{tr}\biggl(W_{t^n}(f_n(j))D_j^n\biggr)\\
&=\sum_{a^n}\sum_{f_n(j)=a^n}E^n(a^n\mid j)\left(\prod_{k\not=j}E^n(f_n(k)\mid k)\right)\mathrm{tr}\biggl(W_{t^n}(f_n(j))D_j^n\biggr)\\
&=\sum_{a^n}\sum_{f_n(j)=a^n}E^n(a^n\mid j)\mathrm{tr}\biggl(W_{t^n}(f_n(j))D_j^n\biggr)\\
&=\sum_{a^n}E^n(a^n\mid j)\mathrm{tr}\biggl(W_{t^n}(a^n)D_j^n\biggr)
\text{ .}\end{align*}

By (\ref{detrand}), for   any  $n\in\mathbb{N}$, $J_n\in\mathbb{N}\setminus\{1\}$,
let $\mathcal{C}$ be any
 $(n, J_n)$
 deterministic  code with a random encoder,  if $\{W_t : t \in \theta\}$ is
symmetrizable, we have \[\max_{t\in\theta}P_e(\mathcal{C},t^n)>c\text{ .}\]
Thus the only achievable  deterministic secrecy capacity of
$\{(W_t,{V}_t): t \in \theta\}$ is $\log 1 =0$. Therefore
$C_s(\{(W_t,{V}_t): t \in \theta\})=0$. (Actually, (\ref{detrand})
shows that if $\{W_t : t \in \theta\}$ is symmetrizable, even the
deterministic capacity for message transmission of $\{(W_t,{V}_t): t \in \theta\}$
with random encoding technique
is equal to zero. Since the deterministic secrecy capacity
$C_s(\{(W_t,{V}_t): t \in \theta\})$ cannot exceed the  deterministic
 capacity for message transmission, we have $C_s(\{(W_t,{V}_t): t \in \theta\})=0$.)
This completes the proof of Theorem \ref{dichpart}. \ref{dichpart1b}.
\end{proof}

As we learn from  Example \ref{nachhinten}, there are indeed arbitrarily varying classical-quantum
 wiretap channels which have zero deterministic secrecy capacity and positive random  secrecy capacity.
Therefore, as Theorem \ref{dichpart}. \ref{dichpart1} shows,   randomness  is indeed a very helpful resource
 for the secure message transmission through an arbitrarily varying classical-quantum
 wiretap channel. But the problem is:  how  should the sender
and the receiver know  which  code is used in the particular
transmission?

Theorem \ref{dichpart}. \ref{dichpart2} shows  that  common randomness capacity is
always equal to the random secrecy capacity, even  for the arbitrarily varying classical-quantum
 wiretap channels of  Example \ref{nachhinten}. Therefore, common randomness  is
an equally helpful  resource
for the secure message transmission through an arbitrarily varying classical-quantum
 wiretap channel. However, as \cite{Bo/No} showed,  common randomness  is  a very ``costly'' resource.
 As Theorem \ref{dichpart} shows, for the transmission of  common randomness
we have to require that the
 deterministic capacity for message transmission of the sender's and legal receiver's
 channel is positive.
In the following Section \ref{AVCQWCWCA}, we will see that the much
``cheaper'' resource, the $m-a-(X,Y)$
  correlation, is also an equally helpful
    resource  for  the message transmission through an arbitrarily varying classical-quantum
channel. The advantage here is that we do not have to require that the
 deterministic capacity for message transmission of the sender's and legal receiver's
 channel is positive.

\section{Arbitrarily Varying Classical-Quantum Wiretap Channel with Correlation Assistance}\label{AVCQWCWCA}

In this section we consider the $m-a-(X,Y)$
 correlation-assisted  secrecy capacity of an arbitrarily
varying  classical-quantum wiretap channel.\vspace{0.15cm}

  Theorem \ref{dichpart}. \ref{dichpart2} shows that    common randomness is a helpful
    resource  for  the secure message transmission through an arbitrarily varying classical-quantum
 wiretap channel. The $m-a-(X,Y)$
  correlation is a weaker resource than  
 common randomness (cf. \cite{Bo/No}). We can simulate any $m-a-(X,Y)$
  correlation  by  common randomness asymptotically, but there exists a class of
sequences of bipartite distributions which cannot model   common randomness
(cf. Lemma 1 of \cite{Bo/No}).
However,  the results of \cite{Bo/No} show that the ``cheaper'' $m-a-(X,Y)$
  correlation is nevertheless a helpful
    resource   for    message transmission through an arbitrarily varying classical-quantum
channel. Our following Theorem \ref{causalcommonrandom} shows that also in  case of   secure message transmission through
an arbitrarily varying classical-quantum
 wiretap channel, the $m-a-(X,Y)$
  correlation  assistance  is  an  equally helpful
    resource  as    common randomness.


\begin{theorem}
Let $\{(W_t,{V}_t): t \in \theta\}$
 be an arbitrarily
varying  classical-quantum wiretap channel. Let $\mathbf{X}$ and $\mathbf{Y}$ be finite sets.
If $I(X,Y)>0$ holds for  a
random variable $(X,Y)$ which is distributed according to a joint probability distribution
$p\in P(\mathbf{X}\times\mathbf{Y})$,
 then the randomness-assisted secrecy capacity is equal
to the
 $m-a-(X,Y)$
  correlation-assisted  secrecy capacity.
\label{causalcommonrandom}
\end{theorem}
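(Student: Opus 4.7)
My first step is to observe that $C_s((W_t,{V}_t)_{t \in \theta};corr_{causal}(X,Y)) \le C_s((W_t,{V}_t)_{t \in \theta};r)$ already follows from the hierarchy of capacities recorded above, so only the reverse inequality needs proof. Combining this with Theorem \ref{dichpart2}, which gives $C_s((W_t,{V}_t)_{t \in \theta};r) = C_s((W_t,{V}_t)_{t \in \theta};cr)$, reduces the task to simulating any common randomness assisted code by a causal $(X,Y)$-correlation assisted code with negligible rate loss. So I fix an $(n,J_n)$ common randomness assisted code $\{C^{\gamma_1},\ldots,C^{\gamma_N}\}$ with $N$ polynomial in $n$ and rate arbitrarily close to $C_s((W_t,{V}_t)_{t \in \theta};r)$.

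The core construction uses a sublinear prefix of the correlation to generate the $\log N = O(\log n)$ bits of (approximate) common randomness needed to select the index $K$ of the sub-code $C^{\gamma_K}$ to employ. Since $I(X,Y) > 0$, the common randomness extraction results of Ahlswede and Csisz\'ar \cite{Ahl/Csi1}, in the form adapted to the arbitrarily varying classical-quantum setting in \cite{Bo/No}, yield, from $m := \lceil \log^2 n \rceil$ i.i.d.\ copies $(X_i,Y_i)_{i=1}^m$, a random index $K$ for which sender and receiver agree with probability $1 - \exp(-\Omega(m))$ and whose distribution is $\exp(-\Omega(m))$-close to uniform on $\{1,\ldots,N\}$. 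The resulting causal code has two phases: during symbols $1,\ldots,m$ the encoder transmits a fixed letter $a^* \in A$ independent of the message $j$, and during symbols $m+1,\ldots,n$ it applies the stochastic encoder of $C^{\gamma_K}$. The decoder computes its version of $K$ from $(\mathbf{y}_1,\ldots,\mathbf{y}_m)$ and then applies the decoder operators of $C^{\gamma_K}$ to the remaining $n-m$ channel outputs. Since $m/n \to 0$, the achieved rate is $C_s((W_t,{V}_t)_{t \in \theta};r) - o(1)$.

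For the error analysis, conditioning on the agreement event $\{K = K'\}$ reduces the error probability to that of the underlying common randomness assisted code plus the $\exp(-\Omega(m))$ extraction failure term, both uniformly negligible in $t^n$. For secrecy, phase one contributes no Holevo leakage because its output is independent of $j$; phase two inherits the secrecy bound of $\{C^{\gamma_i}\}$ up to an additive $\exp(-\Omega(m))$ correction. The main technical obstacle is verifying that the construction strictly respects the causal constraint (the $R_l$ condition in the definition of causal codes): the encoder at time $i > m$ must depend only on $\mathbf{x}_1,\ldots,\mathbf{x}_i$, which is consistent with our design since $K$ is a function of $\mathbf{x}^m$ alone, and the bounds on the average Holevo quantity $\chi(R_{uni};Z_{t^n,\mathbf{x}^n})$ over the correlation must be carefully extracted from the average-over-$K$ bounds of the starting code. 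A secondary subtlety is to rule out that the wiretapper, who sees $V_{t^n}$ on the full output but not $\mathbf{x}^n$, can exploit phase-one symbols to gain information about $j$; this holds because the filler letter $a^*$ is message-independent and the secrecy of the common randomness assisted code does not require $K$ to be kept secret from the eavesdropper.
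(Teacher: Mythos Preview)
Your reduction via Theorem~\ref{dichpart2} to a common randomness code with polynomially many sub-codes is the right starting point, and the two-phase architecture is also what the paper uses. The gap is in how you generate the shared index $K$ in phase one. You claim that from $m$ i.i.d.\ copies of $(X,Y)$ alone, sender and receiver can agree with high probability on a nearly uniform $K\in\{1,\dots,N\}$, citing \cite{Ahl/Csi1} and \cite{Bo/No}. But without communication this is possible only when the G\'acs--K\"orner common information of $(X,Y)$ is positive; mere positivity of $I(X,Y)$ does not suffice (take a binary symmetric source with crossover probability in $(0,1/2)$: $I(X,Y)>0$ yet no non-constant function of $X$ equals a function of $Y$ almost surely, and agreement probability on even one bit is bounded away from $1$). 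Since your phase one transmits only the fixed letter $a^*$, the channel carries no information and you are in exactly this no-communication regime. Neither \cite{Ahl/Csi1} (which allows rate-limited communication) nor \cite{Bo/No} (which proves the function-alphabet AVC result, not source-only extraction) supports the step you need.

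The paper's proof fixes precisely this by \emph{using the channel} in phase one. It passes to the arbitrarily varying classical-quantum channel $(\tilde U_t)_{t\in\theta}$ with input alphabet $\mathfrak{F}=\{f:\mathbf{X}\to A\}$, for which \cite{Bo/No} shows that $I(X,Y)>0$ implies positive deterministic capacity (equivalently, non-symmetrizability). A deterministic code for $(\tilde U_t)$ of length $\nu(n)$ with $2^{\nu(n)}$ polynomial in $n$ then transmits the index $\gamma$; translating back, the phase-one channel input is $c^{\nu(n)}_{\mathbf{x}^{\nu(n)},\gamma}=f^{\nu(n)}_\gamma(\mathbf{x}^{\nu(n)})$, which depends on both $\gamma$ and the sender's correlation sample and is causal by construction. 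The paper then needs a separate argument for the case where the causal correlation-assisted secrecy capacity is zero, showing via symmetrizability of $(\tilde U_t)$ that the randomness assisted secrecy capacity must also vanish. Your sketch would have to be rebuilt along these lines: let phase one be a genuine (correlation-randomized) transmission of $\gamma$ through $(W_t)$ rather than a filler, and invoke the function-alphabet reduction to guarantee that such a transmission succeeds whenever $I(X,Y)>0$.
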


\begin{proof}

Our proof is similar to the capacity results of arbitrarily varying  channels with correlation assistance in
\cite{Ahl/Cai} and \cite{Bo/No}. 

\subsection{When the  randomness-assisted code has positive secrecy capacity}

If the  randomness-assisted secrecy capacity of  $(W_t,{V}_t)_{t \in \theta}$ is positive,
we can build a new arbitrarily
varying  classical-quantum  channel
 $\{\tilde{U}_{t}:t \in \theta\}$ to
create  common randomness for the sender and the legal receiver. We show that this channel does not
have to be secure to be useful for a secure code for the 
original arbitrarily
varying  classical-quantum wiretap channel. Then, similar to our proof of
 Theorem \ref{dichpart}. \ref{dichpart1}, the sender and the 
 legal receiver can build two-part code word,
which consists of a non-secure code word for $\{\tilde{U}_{t}:t \in \theta\}$
to pass the index and
 a   common randomness-assisted secure code to transmit the message .\vspace{0.2cm}

At first we assume that the
 $m-a-(X,Y)$
      secrecy capacity of $\{(W_t,{V}_t): t \in \theta\}$ is positive, then
 the
 $m-a-(X,Y)$
      capacity of the arbitrarily varying classical-quantum channel
 $\{W_t : t \in \theta\}$ is positive. For the definition
of the capacity of an arbitrarily varying classical-quantum channel
please see \cite{Bo/No}.

By Theorem \ref{dichpart}. \ref{dichpart2},  the randomness-assisted secrecy
capacity is equal to the common randomness-assisted secrecy
capacity. Let $\delta>0$, $\zeta>0$, and  $\epsilon>0$, and
 $\biggl\{\mathcal{C}^{\gamma}=\Bigl(E_{\gamma}^n,\{D_{\gamma,j}^n:
 j\in \{1,\ldots J_n\}\}\Bigr):\gamma\in \Gamma\biggr\}$ be an $(n, J_n)$  common randomness-assisted
quantum code such that $\frac{\log J_n}{n} > C_{s}((W_t,{V}_t)_{t
\in \theta},r)-\delta$, and
\[ \max_{t^n\in\theta^n} \frac{1}{\left|\Gamma\right|} \sum_{\gamma=1}^{\left|\Gamma\right|}P_{e}(\mathcal{C}^{\gamma},t^n) < \epsilon\text{ ,}\]
\[\max_{t^n\in\theta^n} \frac{1}{\left|\Gamma\right|} \sum_{\gamma=1}^{\left|\Gamma\right|}
\chi\left(R_{\mathrm{uni}},Z_{\mathcal{C}^{\gamma},t^n}\right) < \zeta\text{ .}\]\vspace{0.2cm}

We denote $\mathfrak{F} := \{ f: f\text{ is a function } \mathbf{X}\rightarrow \mathbf{A}\}$. 
Let $H_{\mathbf{Y}}$  be a Hilbert space of dimension $|\mathbf{Y}|$ and
$\{{ \breve{\kappa}}_{y}: y\in \mathbf{Y}\}$ be a set of  pairwise orthogonal and pure states on  $H_{\mathbf{Y}}$.
For every $t\in\theta$,
\begin{equation} \tilde{U}_{t}(f) :=
\sum_{\mathbf{x}} \sum_{\mathbf{y}}
 p(\mathbf{x},\mathbf{y}){ \breve{\kappa}}_{y} \otimes W_{t}\left(f(\mathbf{x})\right) \label{newAVC}
\end{equation} 
 defines a
 classical-quantum  channel \[\tilde{U}_{t}: \mathfrak{F}  \rightarrow \mathcal{S}(H\otimes H_{\mathbf{Y}})\text{ .}\]
$\{\tilde{U}_{t}:t \in \theta\}$
defines an
arbitrarily
varying  classical-quantum  channel $ \mathfrak{F}  \rightarrow \mathcal{S}(H)\otimes H_{\mathbf{Y}}$.\vspace{0.15cm}

In \cite{Bo/No} (see also \cite{Ahl/Cai} for a classical version),
it was shown that if $I(X,Y)$ is positive, the deterministic
capacity of $(\tilde{U}_{t})_{t \in \theta}$ is equal to the
 $m-a-(X,Y)$
      capacity of
 $\{W_t : t \in \theta\}$.
 By Remark \ref{detvsran},
we may assume that the  deterministic
capacity of $(\tilde{U}_{t})_{t \in \theta}$
using deterministic encoder is positive.
This means that
the sender and the
receiver can build a code
$\Bigl(\left(f^{\nu(n)}_{\gamma}\right)_{\gamma=1,\ldots,\left|\Gamma\right|},   \{D_{\gamma}^{\nu(n)}:
\gamma=1,\ldots,\left|\Gamma\right|\}\Bigr)$ with deterministic encoder
for $(\tilde{U}_{t})_{t \in \theta}$ of
length $\nu(n)$, where  $2^{\nu(n)}$ is in polynomial order of $n$
and $f^{\nu(n)}_{\gamma} (\mathbf{x}^{\nu(n)}) = \Big(f_{\gamma,1}(\mathbf{x}_1),\ldots,f_{\gamma,\nu(n)}
(\mathbf{x}_{\nu(n)})\Big)$ for $\mathbf{x}^{\nu(n)} = (\mathbf{x}_1,\ldots,\mathbf{x}_{\nu(n)})$, such
that the following statement is valid. For any positive $\vartheta$, if $n$ is large enough, we have
\begin{align} &1-\vartheta\notag\\
&\leq\min_{t^{\nu(n)}\in\theta^{\nu(n)}} \frac{1}{\left|\Gamma\right|}
 \sum_{\gamma=1}^{\left|\Gamma\right|}
\mathrm{tr}\biggl(\tilde{U}_{t^{\nu(n)}}\left(f^{\nu(n)}_{\gamma}\right)
D_{\gamma}^{\nu(n)}\biggr)\notag\\
&=\min_{t^{\nu(n)}\in\theta^{\nu(n)}} \frac{1}{\left|\Gamma\right|}\mathrm{tr}\biggl(
 \sum_{\gamma=1}^{\left|\Gamma\right|} \sum_{\mathbf{x}^{\nu(n)}\in\mathbf{X}^{\nu(n)}}
 \sum_{\mathbf{y}^{\nu(n)}\in\mathbf{Y}^{\nu(n)}} p\left(\mathbf{x}^{\nu(n)},\mathbf{y}^{\nu(n)}\right)\notag\\
&\cdot \Bigl[{ \breve{\kappa}}_{y^n} \otimes W_{t^{\nu(n)}}(f^{\nu(n)}_{\gamma}(\mathbf{x}^{\nu(n)}))\Bigr]  D_{\gamma}^{\nu(n)}\biggr)\notag\\
 &=\min_{t^{\nu(n)}\in\theta^{\nu(n)}} \frac{1}{\left|\Gamma\right|}
 \sum_{\gamma=1}^{\left|\Gamma\right|} \sum_{\mathbf{x}^{\nu(n)}\in\mathbf{X}^{\nu(n)}}
\sum_{\mathbf{y}^{\nu(n)}\in\mathbf{Y}^{\nu(n)}} p\left(\mathbf{x}^{\nu(n)},\mathbf{y}^{\nu(n)}\right)\notag\\
 &\cdot \mathrm{tr}\biggl(W_{t^{\nu(n)}}(c^{\nu(n)}_{\mathbf{x}^{\nu(n)},\gamma})
 D_{(\mathbf{y}^{\nu(n)}),\gamma}^{\nu(n)}\biggr)\text{ ,} \label{ineqfornocausal}\end{align}
where for every $\gamma\in\Gamma$, $\mathbf{x}^{\nu(n)}= (\mathbf{x}_1,\ldots,\mathbf{x}_{{\nu(n)}})\in\mathbf{X}^{\nu(n)}$,
 and $\mathbf{y}^{\nu(n)}= (\mathbf{y}_1,\ldots,\mathbf{y}_{{\nu(n)}})\in\mathbf{Y}^{\nu(n)}$,
 we set  $p(\mathbf{x}^{\nu(n)},\mathbf{y}^{\nu(n)})=\prod_{i,j} p(\mathbf{x}_i,\mathbf{y}_j)$,
\[c^{\nu(n)}_{\mathbf{x}^{\nu(n)},\gamma}:=f^{\nu(n)}_{\gamma}(\mathbf{x}^{\nu(n)})\in {\mathbf{A}}^{\nu(n)}\text{ ,} \]
  and
    \[  D_{(\mathbf{y}^{\nu(n)}),\gamma}^{\nu(n)}:=  \mathrm{tr}_{H_{\mathbf{Y}^{\nu(n)}}}   \left(
    ({ \breve{\kappa}}_{y^{\nu(n)}} \otimes \mathrm{id}_{H^{\otimes \nu(n)}})  D_{\gamma}^{\nu(n)}  \right)     \text{ .} \]
    The last equation of (\ref{ineqfornocausal}) holds because
\begin{align*} & \mathrm{tr}\biggl(W_{t^{\nu(n)}}(c^{\nu(n)}_{\mathbf{x}^{\nu(n)},\gamma})
    \mathrm{tr}_{H_{\mathbf{Y}^{\nu(n)}}}   \left(
({ \breve{\kappa}}_{y^{\nu(n)}} \otimes \mathrm{id}_{H^{\otimes \nu(n)}})  D_{\gamma}^{\nu(n)}  \right)  \biggr)\\
    &=  \mathrm{tr}\left(\Bigl[\mathrm{id}_{H_{\mathbf{Y}^{\nu(n)}}} \otimes  W_{t^{\nu(n)}}(c^{\nu(n)}_{\mathbf{x}^{\nu(n)},\gamma})\Bigr]
    \Bigl[{ \breve{\kappa}}_{y^{\nu(n)}} \otimes \mathrm{id}_{H^{\otimes \nu(n)}}\Bigr]  D_{\gamma}^{\nu(n)}     \right)\\
    &=\mathrm{tr}\biggl(
    \Bigl[{ \breve{\kappa}}_{y^{\nu(n)}} \otimes  W_{t^{\nu(n)}}(c^{\nu(n)}_{\mathbf{x}^{\nu(n)},\gamma})\Bigr]  D_{\gamma}^{\nu(n)}     \biggr)
    \text{ .} \end{align*}
    \vspace{0.15cm}

Since $\sum_{\gamma=1}^{\left|\Gamma\right|} D_{(\mathbf{y}^{\nu(n)}),\gamma}^{\nu(n)}$
$=$ $\sum_{\gamma=1}^{\left|\Gamma\right|}\mathrm{tr}_{H_{\mathbf{Y}^{\nu(n)}}}   \Bigl(
    ({ \breve{\kappa}}_{y^{\nu(n)}} \otimes \mathrm{id}_{H^{\otimes \nu(n)}})  D_{\gamma}^{\nu(n)}  \Bigr)$
    $=$ $\mathrm{tr}_{H_{\mathbf{Y}^{\nu(n)}}}   \Bigl(
    ({ \breve{\kappa}}_{y^{\nu(n)}} \otimes \mathrm{id}_{H^{\otimes \nu(n)}})  \sum_{\gamma=1}^{\left|\Gamma\right|}D_{\gamma}^{\nu(n)} \Bigr) $
$= $ $\mathrm{id}_{{H}^{\otimes \nu(n)}}$,
we can define an   $(X,Y)$-correlation-assisted
  $(\nu(n), \left|\Gamma\right|)$  code  (this is a  code with deterministic encoder)
by $\biggl(\Bigl(c^{\nu(n)}_{\mathbf{x}^{\nu(n)},\gamma}\Bigr)_{\gamma\in\{1,\ldots,\left|\Gamma\right|\}},$ $\{D_{(\mathbf{y}^{\nu(n)}),\gamma}^{\nu(n)}:$
$\gamma\in\{1,\ldots,\left|\Gamma\right|\}\}\biggr)$.\vspace{0.2cm}


Now we can construct an $(X,Y)$-correlation-assisted $(\nu(n) +n,
J_n)$ code $\mathcal{C}(X,Y)$ $=$
$\biggl\{\Bigl(E_{\mathbf{x}^{\nu(n)+n}},\{D_j^{\mathbf{y}^{\nu(n)+n}}:
j\in \{ 1,\ldots
,J_n\}\}\Bigr) :\mathbf{x}^{\nu(n)+n}\in\mathbf{X}^{\nu(n)+n},\mathbf{y}^{\nu(n)+n}\in\mathbf{Y}^{\nu(n)+n}\biggr\}$,
where for $\mathbf{x}^{\nu(n)
+n}=(\mathbf{x}^{\nu(n)},\mathbf{x}^{n})$,  $\mathbf{y}^{\nu(n)
+n}=(\mathbf{y}^{\nu(n)},\mathbf{y}^{n})$ and $a^{\nu(n)
+n}=(a^{\nu(n)},a^{n})\in {\mathbf{A}}^{\nu(n) +n}$
\[E_{\mathbf{x}^{\nu(n)+n}}(a^{\nu(n) +n}|j)=\begin{cases}
  \frac{1}{\left|\Gamma\right|}E_{\gamma}(a^{n}|j) \text{ if } a^{\nu(n)} = c^{\nu(n)}_{\mathbf{x}^{\nu(n)},\gamma}\\
    0 \text{ else}             \end{cases}\text{ ,}
\] and \[D_j^{\mathbf{y}^{\nu(n)+n}} := \sum_{\gamma=1}^{\left|\Gamma\right|}
D_{(\mathbf{y}^{\nu(n)}),\gamma}^{\nu(n)}\otimes D_{\gamma,j}^{n} \text{ .}
\]


For any $\gamma\in \{1,\ldots ,\left|\Gamma\right|\}$ let
\begin{align*}&\mathfrak{Z}_{\gamma,t^{\nu(n)+n},\mathbf{x}^{\nu(n)+n}}\\
& :=\biggl\{ {V}_{t^{\nu(n)}}
\left(c^{\nu(n)}_{\mathbf{x}^{\nu(n)},\gamma}\right)\otimes {V}_{t^{n}}  \left(E_{\gamma}(a^{n}|1)\right),\ldots,\\
& {V}_{t^{\nu(n)}}
\left(c^{\nu(n)}_{\mathbf{x}^{\nu(n)},\gamma}\right)\otimes {V}_{t^{n}}  \left(E_{\gamma}(a^{n}|J_n)\right)\biggr\}\text{ .}\end{align*} 
 Similar to (\ref{einfach}), for any
$\mathbf{x}^{\nu(n)+n}\in\mathbf{X}^{\nu(n)+n}$, $\gamma\in\Gamma$, and $t^{\nu(n)
+n}= (t^{\nu(n) }t^{n})$  we have
\begin{align}& \chi\left(R_{\mathrm{uni}},\mathfrak{Z}_{\gamma,t^{\nu(n)+n},\mathbf{x}^{\nu(n)+n}}\right)=
\chi\left(R_{\mathrm{uni}},Z_{\mathcal{C}^{\gamma},t^n}\right)\text{
.}\label{einfachb}\end{align}

By definition we have\begin{align*}&Z_{t^{\nu(n)+n},\mathbf{x}^{\nu(n)+n}} :=\\
&\biggl\{\frac{1}{\left|\Gamma\right|}
\sum_{\gamma=1}^{\left|\Gamma\right|}{V}_{t^{\nu(n)}}(c^{\nu(n)}_{\mathbf{x}^{\nu(n)},\gamma})\otimes{V}_{t^{n}}(
E_{\gamma}(~|1)), \ldots,\\
& \frac{1}{\left|\Gamma\right|}
\sum_{i=1}^{\left|\Gamma\right|}{V}_{t^{\nu(n) }}(c^{\nu(n)}_{\mathbf{x}^{\nu(n)},\gamma})\otimes{V}_{t^{n}}(
E_{\gamma}(~|J_n))\biggr\}\text{ .}\end{align*} 


Similar to (\ref{secudet1})  let  $\lambda:=\min\{\epsilon, \zeta\}$,
 for any
$t^{\nu(n) +n}= (t^{\nu(n) },t^{n})$, $\mathbf{x}^{\nu(n) +n}=(\mathbf{x}^{\nu(n)},\mathbf{x}^{n})$
and  $\mathbf{y}^{\nu(n) +n}=(\mathbf{y}^{\nu(n)},\mathbf{y}^{n})$  we have

\begin{align}&\sum_{\mathbf{x}^{\nu(n) +n}\in\mathbf{X}^{\nu(n) +n}}\sum_{\mathbf{y}^{\nu(n) +n}\in\mathbf{Y}^{\nu(n) +n}}
p\left(\mathbf{x}^{\nu(n) +n},\mathbf{y}^{\nu(n) +n}\right)\chi\left(R_{\mathrm{uni}},Z_{t^{\nu(n)+n},\mathbf{x}^{\nu(n)+n}}\right)\notag\\
&\leq\sum_{\mathbf{x}^{\nu(n) +n}} \sum_{\mathbf{y}^{\nu(n) +n}}
p\left(\mathbf{x}^{\nu(n) +n},\mathbf{y}^{\nu(n) +n}\right)
\chi\left(R_{\mathrm{uni}},Z_{t^{\nu(n)+n},\mathbf{x}^{\nu(n)+n}}\right)\notag\\
&-\sum_{\mathbf{x}^{\nu(n)+n}}
 \sum_{\mathbf{y}^{\nu(n)+n}} p\left(\mathbf{x}^{\nu(n)+n},\mathbf{y}^{\nu(n)+n}\right)\frac{1}{\left|\Gamma\right|}\sum_{\gamma=1}^{\left|\Gamma\right|}
\chi\left(R_{\mathrm{uni}},Z_{\mathcal{C}^{\gamma},t^n}\right)+\lambda \notag\\
&=\sum_{\mathbf{x}^{\nu(n) +n}} \sum_{\mathbf{y}^{\nu(n) +n}}
p\left(\mathbf{x}^{\nu(n) +n},\mathbf{y}^{\nu(n) +n}\right)
\chi\left(R_{\mathrm{uni}},Z_{t^{\nu(n)+n},\mathbf{x}^{\nu(n)+n}}\right)\notag\\
&-\sum_{\mathbf{x}^{\nu(n)+n}}
 \sum_{\mathbf{y}^{\nu(n)+n}} p\left(\mathbf{x}^{\nu(n)+n},\mathbf{y}^{\nu(n)+n}\right)\frac{1}{\left|\Gamma\right|}\sum_{\gamma=1}^{\left|\Gamma\right|}
\chi\left(R_{\mathrm{uni}},\mathfrak{Z}_{\mathcal{C}^{\gamma},t^{\nu(n)+n},\mathbf{x}^{\nu(n)+n}}\right)+\lambda \notag\\
&= \sum_{\mathbf{x}^{\nu(n) +n}} \sum_{\mathbf{y}^{\nu(n) +n}}
p\left(\mathbf{x}^{\nu(n) +n},\mathbf{y}^{\nu(n) +n}\right)\Biggl[ S\left( \frac{1}{J_n}\frac{1}{\left|\Gamma\right|}\sum_{j=1}^{J_n}
\sum_{i=1}^{\left|\Gamma\right|}{V}_{t^{\nu(n) }}\left(c^{\nu(n)}_{\mathbf{x}^{\nu(n)},\gamma}\right)\otimes{V}_{t^{n}}\left(
E_{\gamma}(~|j)\right) \right)\notag\\
&-\frac{1}{J_n}\sum_{j=1}^{J_n}S\left( \frac{1}{\left|\Gamma\right|}
\sum_{i=1}^{\left|\Gamma\right|}{V}_{t^{\nu(n) }}\left(c^{\nu(n)}_{\mathbf{x}^{\nu(n)},\gamma}\right)\otimes{V}_{t^{n}}\left(
E_{\gamma}(~|j)\right) \right)\notag\\
&- \frac{1}{\left|\Gamma\right|}\sum_{i=1}^{\left|\Gamma\right|}S\left( \frac{1}{J_n}\sum_{j=1}^{J_n}
{V}_{t^{\nu(n) }}\left(c^{\nu(n)}_{\mathbf{x}^{\nu(n)},\gamma}\right)\otimes{V}_{t^{n}}\left(
E_{\gamma}(~|j)\right) \right)\notag\\
&+ \frac{1}{J_n}\frac{1}{\left|\Gamma\right|}\sum_{j=1}^{J_n}
\sum_{i=1}^{\left|\Gamma\right|}S\left( {V}_{t^{\nu(n) }}\left(c^{\nu(n)}_{\mathbf{x}^{\nu(n)},\gamma}\right)\otimes{V}_{t^{n}}\left(
E_{\gamma}(~|j)\right) \right)\Biggr]+\lambda\notag\\
 &\leq\lambda\text{ .}\label{secudetb}\end{align}

By (\ref{ineqfornocausal}), for any $t^{\nu(n) +n}\in\theta^{\nu(n) +n}$, 
\begin{align}& \sum_{\mathbf{x}^{\nu(n) +n}} \sum_{\mathbf{y}^{\nu(n) +n}}
p(\mathbf{x}^{\nu(n) +n},\mathbf{y}^{\nu(n) +n}) P_e\left(\mathcal{C}(\mathbf{x}^{\nu(n) +n},\mathbf{y}^{\nu(n) +n}), t^{\nu(n)+n}\right)\notag\\
&=1-\sum_{\mathbf{x}^{\nu(n) +n}} \sum_{\mathbf{y}^{\nu(n) +n}}
p\left(\mathbf{x}^{\nu(n) +n},\mathbf{y}^{\nu(n)
+n}\right)\frac{1}{J_n}\sum_{j=1}^{J_n}\mathrm{tr}\biggl(\biggl[\frac{1}{\left|\Gamma\right|}
 \sum_{\gamma=1}^{\left|\Gamma\right|}\notag\\
&{V}_{t^{\nu(n) }}(c^{\nu(n)}_{\mathbf{x}^{\nu(n)},\gamma})\otimes
 {V}_{t^{n}}(E_{\gamma}(~|j))\biggr]\cdot\left[\sum_{\gamma=1}^{\left|\Gamma\right|}
D_{(\mathbf{y}^{\nu(n)}),\gamma}^{\nu(n)}\otimes D_{\gamma,j}^{n}\right]
\biggr)\notag\\
&\leq 1-\sum_{\mathbf{x}^{\nu(n) +n}} \sum_{\mathbf{y}^{\nu(n) +n}}
p(\mathbf{x}^{\nu(n) +n},\mathbf{y}^{\nu(n)
+n})
\frac{1}{J_n}\sum_{j=1}^{J_n}\mathrm{tr}\biggl(\frac{1}{\left|\Gamma\right|}
 \sum_{\gamma=1}^{\left|\Gamma\right|}\notag\\
&\left[{V}_{t^{\nu(n) }}(c^{\nu(n)}_{\mathbf{x}^{\nu(n)},\gamma})\otimes
 {V}_{t^{n}}(E_{\gamma}(~|j))\right]\cdot\left[
D_{(\mathbf{y}^{\nu(n)}),\gamma}^{\nu(n)}\otimes D_{\gamma,j}^{n}\right]
\biggr)\notag\\
&=1-\sum_{\mathbf{x}^{\nu(n) }} \sum_{\mathbf{y}^{\nu(n) }}
p(\mathbf{x}^{\nu(n) },\mathbf{y}^{\nu(n)})
\frac{1}{J_n}\sum_{j=1}^{J_n}\mathrm{tr}\biggl(\frac{1}{\left|\Gamma\right|}
 \sum_{\gamma=1}^{\left|\Gamma\right|}\notag\\
&\left[{V}_{t^{\nu(n) }}(c^{\nu(n)}_{\mathbf{x}^{\nu(n)},\gamma})
 D_{(\mathbf{y}^{\nu(n)}),\gamma}^{\nu(n)}\right]\otimes\left[
{V}_{t^{n}}(E_{\gamma}(~|j)) D_{\gamma,j}^{n}\right]
\biggr)\notag\\
&=1-\sum_{\mathbf{x}^{\nu(n) }} \sum_{\mathbf{y}^{\nu(n) }}
p(\mathbf{x}^{\nu(n) },\mathbf{y}^{\nu(n)})\frac{1}{\left|\Gamma\right|}
 \sum_{\gamma=1}^{\left|\Gamma\right|}\mathrm{tr}\left(
{V}_{t^{\nu(n) }}(c^{\nu(n)}_{\mathbf{x}^{\nu(n)},\gamma})
 D_{(\mathbf{y}^{\nu(n)}),\gamma}^{\nu(n)}\right) \notag\\
&\cdot\left(\frac{1}{J_n}\sum_{j=1}^{J_n} \mathrm{tr}(
{V}_{t^{n}}(E_{\gamma}(~|j)) D_{\gamma,j}^{n})\right)\notag\\
&\leq \lambda+\vartheta\text{ .}\label{errordetb}\end{align}
\vspace{0.2cm}

We now combine (\ref{errordetb}) and  (\ref{secudetb}) and obtain the following result. 

If $I(X,Y)$ and
 the $m-a-(X,Y)$
      secrecy capacity of $\{(W_t,{V}_t): t \in \theta\}$ are positive, 
we define $\lambda := \min\{\epsilon, \zeta\} + \vartheta$
and the following
statement is valid.
For any $n\in \mathbb{N}$ and positive $\lambda$, if there is an $(n,
J_n)$ randomness-assisted
 code $(\{\mathcal{C}^{\gamma}:\gamma\in \Lambda\},G)$ for
$\{(W_t,{V}_t): t \in \theta\}$ such that
\[ \max_{t^n\in\theta^n}\int_{\Lambda}P_{e}(\mathcal{C}^{\gamma},t^n)d
G(\gamma) < \epsilon\text{ ,}\] and
\[\max_{t^n\in\theta^n} \int_{\Lambda}
\chi\left(R_{\mathrm{uni}},Z_{\mathcal{C}^{\gamma},t^n}\right)dG(\gamma) <
\zeta\text{ ,}\] then there is also a   $(\nu(n)+n, J_n)$ common
randomness-assisted
 code $\mathcal{C}(X,Y) =
\bigg\{\left(E_{\mathbf{x}^{\nu(n)+n}},D_j^{\mathbf{y}^{\nu(n)+n}}\right):
j\in \{ 1,\ldots
,J_n\},\mathbf{x}^{\nu(n)+n}\in\mathbf{X}^{\nu(n)+n}\mathbf{y}^{\nu(n)+n}\in\mathbf{Y}^{\nu(n)+n}\bigg\}$
such that
\begin{align}\label{ausf1} &\max_{t^{\nu(n)+n} \in \theta^{\nu(n)+n}} \sum_{\mathbf{x}^{\nu(n)+n}\in\mathbf{X}^{\nu(n)+n}}
\sum_{\mathbf{y}^{\nu(n)+n}\in\mathbf{Y}^{\nu(n)+n}}p(\mathbf{x}^{\nu(n)+n},
 \mathbf{y}^{\nu(n)+n})\notag\\
 & P_{e} (C(\mathbf{x}^{\nu(n)+n},\mathbf{y}^{\nu(n)+n}), t^{\nu(n)+n}) < \lambda\text{ ,}\end{align} and
\begin{equation} \label{ausf2}\max_{t^{\nu(n)+n}\in\theta^{\nu(n)+n}}  
\chi\left(R_{\mathrm{uni}};Z_{t^{\nu(n)+n},\mathbf{x}^{\nu(n)+n}}\mid X\right) <
\lambda\text{ .}\end{equation}

(\ref{ausf1}) and (\ref{ausf2}) mean that
\begin{align*}&C_s(\{(W_t,{V}_t): t \in \theta\};corr(X,Y))\geq C_s(\{(W_t,{V}_t): t \in \theta\};r)-\frac{1}{n}\cdot\log J_n\\
& +\frac{1}{\nu(n) +n}\log J_n  \text{ .}\end{align*}
We know that  $2^{\nu(n)}$ is
in polynomial order of $n$. For  any positive
$\varepsilon$, if $n$ is large enough we have $\frac{1}{n}\log
J_n -\frac{1}{\log n +n}\log J_n \leq \varepsilon$.
 Therefore, if $I(X,Y)$ and $C_s(\{(W_t,{V}_t): t \in \theta\};corr(X,Y))$ are both positive, we have
\begin{equation} C_s(\{(W_t,{V}_t): t \in \theta\};corr(X,Y))\geq C_s(\{(W_t,{V}_t): t \in \theta\};r)-\varepsilon \text{ .}\end{equation}
This and the fact that
\[C_s(\{(W_t,{V}_t): t \in \theta\};corr(X,Y))\leq C_s(\{(W_t,{V}_t): t \in \theta\};r) \text{ ,}\]
prove  Theorem \ref{causalcommonrandom} for the case that $C_s((W_t,{V}_t)_{t \in
\theta};corr(X,Y))$ is positive. 

\subsection{When the  randomness-assisted code has zero secrecy capacity}

If the  randomness-assisted secrecy capacity of  $(W_t,{V}_t)_{t \in \theta}$ is equal
to zero, with a similar technique as the techniques in \cite{Ahl/Cai} and \cite{Bo/No}
we   show that the $(X,Y)$
  correlation-assisted  secrecy capacity of $(W_t,{V}_t)_{t \in \theta}$
	is also equal
to zero.\vspace{0.2cm}

 Now we assume that the
 $m-a-(X,Y)$
      secrecy capacity of $\{(W_t,{V}_t): t \in \theta\}$ is
 equal to zero. If $C_s(\{(W_t,{V}_t): t \in \theta\};r)$ is also equal to
 zero, then there is nothing to prove. Thus let us assume that
  $C_s(\{(W_t,{V}_t): t \in \theta\};r)$ is positive.\vspace{0.15cm}

Assume that there is an $(n, J_n)$ randomness-assisted
 code $(\{\mathcal{C}^{\gamma}:\gamma\in \Lambda\},G)$ for
$\{(W_t,{V}_t): t \in \theta\}$ such that
\[ \max_{t^n\in\theta^n}\int_{\Lambda}P_{e}(\mathcal{C}^{\gamma},t^n)d
G(\gamma) < \lambda\text{ ,}\]
\[\max_{t^n\in\theta^n} \int_{\Lambda}
\chi\left(R_{\mathrm{uni}},Z_{\mathcal{C}^{\gamma},t^n}\right)dG(\gamma) <
\lambda\text{ .}\]
 We denote $\mathfrak{F}$ and the
arbitrarily varying  classical-quantum  channel $(\tilde{U}_{t})_{t
\in \theta}: \mathfrak{F}  \rightarrow
\mathcal{S}(H^{n|\mathbf{Y}|})$ as above. If the deterministic
capacity of $(\tilde{U}_{t})_{t \in \theta}$ is positive, we can
build, as above, a $(\nu(n)+n, J_n)$ common randomness-assisted
 code $\mathcal{C}(X,Y) =
\bigg\{\Bigl(E_{\mathbf{x}^{\nu(n)+n}},\{D_j^{\mathbf{y}^{\nu(n)+n}}:j\in \{ 1,\ldots
,J_n\}\}\Bigr):
\mathbf{x}^{\nu(n)+n}\in\mathbf{X}^{\nu(n)+n},\mathbf{y}^{\nu(n)+n}\in\mathbf{Y}^{\nu(n)+n}\bigg\}$
such that
\begin{align*} &\max_{t^{\nu(n)+n} \in \theta^{\nu(n)+n}} \sum_{\mathbf{x}^{\nu(n)+n}\in\mathbf{X}^{\nu(n)+n},
 \mathbf{y}^{\nu(n)+n}\in\mathbf{Y}^{\nu(n)+n}}p(\mathbf{x}^{\nu(n)+n},
 \mathbf{y}^{\nu(n)+n})\\
 & P_{e} (C(\mathbf{x}^{\nu(n)+n},\mathbf{y}^{\nu(n)+n}), t^{\nu(n)+n}) < \epsilon\text{ ,}\end{align*}
\begin{align*} &\max_{t^{\nu(n)+n}\in\theta^{\nu(n)+n}}  \sum_{\mathbf{y}^{\nu(n)+n}\in\mathbf{Y}^{\nu(n)+n}}
p\left(\mathbf{x}^{\nu(n)+n},\mathbf{y}^{\nu(n)+n}\right)\chi\left(R_{\mathrm{uni}};Z_{t^{\nu(n)+n},\mathbf{x}^{\nu(n)+n}}\right)
< \zeta\text{ .} \end{align*} But this would mean
\[C_s(\{(W_t,{V}_t): t \in \theta\};corr(X,Y))= C_s(\{(W_t,{V}_t): t \in \theta\};r) \text{ ,}\]
and there is nothing to prove.\vspace{0.15cm}

Thus we may assume that
the deterministic
capacity of $(\tilde{U}_{t})_{t \in \theta}$ is
 equal to zero. This implies that
 $(\tilde{U}_{t})_{t \in \theta}$ is symmetrizable
 (cf. \cite{Ahl/Bli}), i.e., there is
a parametrized set of distributions $\{\tau(\cdot\mid f):
 f\in \mathfrak{F}\}$ on $\theta$ such that for all
 $f$, $f' \in \mathfrak{F}$ we have

 \begin{align} &\sum_{t\in\theta}\tau(t\mid f')
\sum_{\mathbf{x}}\sum_{\mathbf{y}}
 P\left(\mathbf{X}\times\mathbf{Y}\right) { \breve{\kappa}}_{y} \otimes W_t\left(f(\mathbf{x})\right)
 =\sum_{t\in\theta} \tau(t\mid f)
\sum_{\mathbf{x}}\sum_{\mathbf{y}}
 P\left(\mathbf{X}\times\mathbf{Y}\right) { \breve{\kappa}}_{y} \otimes W_t\left(f'(\mathbf{x})\right)\notag\\
&\Rightarrow  \sum_{t\in\theta}\tau(t\mid f')
\sum_{\mathbf{x}}
 P\left(\mathbf{X}\times\mathbf{Y}\right)  W_t\left(f(\mathbf{x})\right)
 =\sum_{t\in\theta} \tau(t\mid f)
\sum_{\mathbf{x}}
 P\left(\mathbf{X}\times\mathbf{Y}\right)  W_t\left(f'(\mathbf{x})\right)\label{assumesymm}\end{align}
for all $\mathbf{y}\in\mathbf{Y}$.\vspace{0.15cm}

Our approach  is similar to the technique of  \cite{Ahl/Cai}. Let
$\mathbf{A}=\{0,1,\ldots,|\mathbf{A}|-1\}$,
 $\mathbf{X}=\mathbf{Y}=\{0,1\}$.
We define functions $g^*$ and $g_i\in \mathfrak{F}$ for $i=1,\ldots, a-1$
such that
$g^*(0)=g^*(1)=0$ and $g_i(u):=i+u$ mod $|\mathbf{A}|$ for $u\in \{0,1\}$.
Since $(\tilde{U}_{t})_{t \in \theta}$ is symmetrizable, by (\ref{assumesymm})
there is
a parametrized set of distributions $\{\tau(t\mid f):
 f\in \mathfrak{F}\}$ on $\theta$ such that for all $a\in \mathbf{A}$, the following two equalities are
valid
\begin{align*} &\sum_{t\in\theta}p(0,0)\tau(t\mid g^*) W_t(a)\\
&+ \sum_{t\in\theta}p(1,0)\tau(t\mid g^*) W_t(a+1\text{ mod }|\mathbf{A}|)\\
&=\sum_{t\in\theta}p(0,0)\tau(t\mid g_i) W_t(a)\\
&+ \sum_{t\in\theta}p(1,0)\tau(t\mid g_i) W_t(a)\\
&=\sum_{t\in\theta} \tau(t\mid g_i)W_t(a) \text{ ;}\end{align*}
\begin{align*} &\sum_{t\in\theta}p(0,1)\tau(t\mid g^*) W_t(a)\\
&+ \sum_{t\in\theta}p(1,1)\tau(t\mid g^*) W_t(a+1\text{ mod }|\mathbf{A}|)\\
&=\sum_{t\in\theta}p(0,1)\tau(t\mid g_i) W_t(a)\\
&+ \sum_{t\in\theta}p(1,1)\tau(t\mid g_i) W_t(a)\\
&=\sum_{t\in\theta}\tau(t\mid g_i) W_t(a) \text{ .}\end{align*}

 If we choose an arbitrary orthonormal basis on $H$ to write the following quantum states
 in form of matrices
 \[\left(m_{k,l}\right)_{k,l=1,\ldots,\dim H}= \sum_{t\in\theta}\tau(t\mid g^*) W_t(a)\text{ ,}\]
  \[\left({m'}_{k,l}\right)_{k,l=1,\ldots,\dim H}= \sum_{t\in\theta}\tau(t\mid g^*) W_t(a+1\text{ mod }|\mathbf{A}|)\text{ ,}\]
  \[\left({m*}_{k,l}\right)_{k,l=1,\ldots,\dim H}= \sum_{t\in\theta}\tau(t\mid g_i) W_t(a)\text{ ,}\]
 for all $k,l \in \{1,\ldots,\dim H\}$ we have
  \[p(0,0)m_{k,l}+p(1,0){m'}_{k,l}={m*}_{k,l}\text{ ,}\]
 \[p(0,1)m_{k,l}+p(1,1){m'}_{k,l}={m*}_{k,l}\text{ .}\]

  Since $I(X,Y)$ is positive, $p(0,0)\not=p(1,0)$ and $p(0,1)\not=p(1,1)$,
  therefore $\det\left(
 \begin{matrix} p(0,0)& p(1,0)\\
  p(0,1)& p(1,1)
 \end{matrix}\right)\not= 0$. Thus $m_{k,l}={m'}_{k,l}={m*}_{k,l}$
 for all $k,l \in \{1,\ldots,\dim H\}$, this means
 \[\sum_{t\in\theta}\tau(t\mid g^*) W_t(a)=\sum_{t\in\theta}\tau(t\mid g^*) W_t(a+1\text{ mod }|\mathbf{A}|)\]
 for all $a\in \mathbf{A}$.

 Therefore,
 for any $n\in\mathbb{N}$ and any given $(n, J_n)$  code $\mathcal{C}^{\gamma}$ $=$
 $\Bigl(E^{\gamma}, \{D_j^{\gamma}:j=1,\ldots,J_n\}\Bigr)$, the following statement is valid.  Let ${a'}^n$ be an arbitrary
 sequence in ${\mathbf{A}}^n$, we have
\begin{align}&\sum_{t^n\in\theta^n}\tau(t^n\mid g^*)P_e(\mathcal{C}^{\gamma},t^n)\notag\\
&=\sum_{t^n\in\theta^n}\tau(t^n\mid g^*)
 \left[1- \frac{1}{J_n} \sum_{j=1}^{J_n} \mathrm{tr}(W_{t^n}(E_{\gamma}(~|j))D_j^{\gamma})\right]\notag\\
&=\sum_{t^n\in\theta^n}\tau(t^n\mid g^*)
 \left[1- \frac{1}{J_n} \sum_{j=1}^{J_n}  \sum_{a^n \in {\mathbf{A}}^n}
E^{\gamma}(a^n|j)\mathrm{tr}(W_{t^n}(a^n)D_j^{\gamma})\right]\notag\\
&=
1- \frac{1}{J_n} \sum_{j=1}^{J_n}  \sum_{a^n \in {\mathbf{A}}^n}
E^{\gamma}(a^n|j)\mathrm{tr}\left(\sum_{t^n\in\theta^n}\tau(t^n\mid g^*)W_{t^n}(a^n)D_j^{\gamma}\right)\notag\\
&=
1- \frac{1}{J_n} \sum_{j=1}^{J_n}\sum_{a^n \in {\mathbf{A}}^n}
E^{\gamma}(a^n|j)  \mathrm{tr}\left(\sum_{t^n\in\theta^n}\tau(t^n\mid g^*)W_{t^n}({a'}^n)D_j^{\gamma}\right)\notag\\
&=
1- \frac{1}{J_n} \sum_{j=1}^{J_n}  \mathrm{tr}\left(\sum_{t^n\in\theta^n}\tau(t^n\mid g^*)W_{t^n}({a'}^n)D_j^{\gamma}\right)\notag\\
&=
1- \frac{1}{J_n}\sum_{t^n\in\theta^n}\tau(t^n\mid g^*)\mathrm{tr}\left(W_{t^n}({a'}^n) \sum_{j=1}^{J_n}  D_j^{\gamma}\right)\notag\\
&=
1- \frac{1}{J_n}\sum_{t^n\in\theta^n}\tau(t^n\mid g^*)\mathrm{tr}\left(W_{t^n}({a'}^n) \right)\notag\\
&= 1- \frac{1}{J_n}\text{ ,}\end{align}
 where $\tau(t^n\mid g^*) := \tau(t_1\mid g^*) \tau(t_2\mid g^*) \ldots \tau(t_n\mid g^*)$ for
 $t^n=(t_1,t_2,\ldots, t_n)$. The second and the fifth equations hold because the trace function
and matrices' multiplication are linear. The first, the fourth, and the last  equations hold because
$\sum_{t^n\in\theta^n}\tau(t^n\mid g^*)$ $=$ $\sum_{a^n \in {\mathbf{A}}^n}
E^{\gamma}(a^n|j)$ $=$ $\mathrm{tr}\left(W_{t^n}({a'}^n) \right)$ $=1$ for all $g^*$, $j$, and ${a'}^n$. The sixth
equation holds because $\sum_{j=1}^{J_n}  D_j^{\gamma}=id$.

Thus for  any $n\in\mathbb{N}$,   any
$J_n\in\mathbb{N}\setminus\{1\}$, and any $(n, J_n)$  randomness-assisted
quantum code $(\{\mathcal{C}^{\gamma}:\gamma\in \Lambda\},G)$  we have
\begin{align}&\frac{J_n-1}{J_n}\notag\\
&= \int_{\Lambda} \sum_{t^n\in\theta^n}\tau(t^n\mid g^*) P_{e}(\mathcal{C}^{\gamma},t^n)d
G(\gamma) \notag\\
& = \sum_{t^n\in\theta^n}\tau(t^n\mid g^*)\int_{\Lambda}  P_{e}(\mathcal{C}^{\gamma},t^n)d
G(\gamma) \notag\\
&=\mathbb{E}\left(\int_{\Lambda}  P_{e}(\mathcal{C}^{\gamma},\mathfrak{T}^n)d G(\gamma) \right)\label{swi}\text{ ,}\end{align}
  where $\mathfrak{T}^n$ is a random variable on $\theta^n$ such that
  $Pr(\mathfrak{T}^n=t^n)=\tau(t^n\mid g^*)$ for all $t^n\in\theta^n$.
 \vspace{0.15cm}

 By  (\ref{swi})  for  any $n\in\mathbb{N}$,   any
$J_n\in\mathbb{N}\setminus\{1\}$ and any $(n, J_n)$  random-assisted
 quantum code $(\{\mathcal{C}^{\gamma}:\gamma\in \Lambda\},G)$,
 there exists at least one  $t^n\in\theta^n$ such that
  \begin{equation}\int_{\Lambda}  P_{e}(\mathcal{C}^{\gamma},t^n)d G(\gamma)
  \geq \frac{J_n-1}{J_n}\text{ .}\label{swi2}\end{equation}\vspace{0.15cm}

By  (\ref{swi2}) for  any $n\in\mathbb{N}$,  any $J_n>1$, there is no $(n, J_n)$
randomness-assisted
 code $(\{\mathcal{C}^{\gamma}:\gamma\in \Lambda\},G)$ for
$\{(W_t,{V}_t): t \in \theta\}$ such that
\[ \max_{t^n\in\theta^n}\int_{\Lambda}P_{e}(\mathcal{C}^{\gamma},t^n)d
G(\gamma) < \frac{1}{2}\text{ ,}\]
  therefore if the
  $m-a-(X,Y)$
      secrecy capacity of $\{(W_t,{V}_t): t \in \theta\}$ is
 equal to zero and $I(X,Y)$ is positive,
  the randomness-assisted secrecy capacity  of $\{(W_t,{V}_t): t \in \theta\}$ is
 equal to $\log 1 =0$. But this is a contradiction to our assumption that
$C_s(\{(W_t,{V}_t): t \in \theta\};r)$ is positive.\vspace{0.2cm}

 This result and the result for the case
 when the
  $m-a-(X,Y)$
      secrecy capacity of $\{(W_t,{V}_t): t \in \theta\}$ is
  positive complete our proof for Theorem \ref{causalcommonrandom}.
  \end{proof}

Theorem \ref{causalcommonrandom} shows that  the correlation is
 a very helpful resource
 for the secure message transmission through an arbitrarily varying classical-quantum
 wiretap channel. As Example \ref{nachhinten} shows, there are indeed arbitrarily varying classical-quantum
 wiretap channels which have zero deterministic secrecy capacity, but 
at the same time positive random  secrecy capacity.
Theorem \ref{causalcommonrandom} shows that if we   have a
$m-a-(X,Y)$ correlation as a resource, even when it is insecure and 
very weak (i.e. $I(X,Y)$  needs only to be  slightly
larger than zero), these channels will have a positive  $m-a-(X,Y)$
secrecy capacity.

\section{Applications and Further Notes }\label{AP}

In Subsection \ref{FNoRT}  we will
discuss the importance of the Ahlswede dichotomy for arbitrarily
varying classical-quantum
wiretap  channels.
We will show that it
can  occur  that the deterministic capacity of
an arbitrarily varying classical-quantum
 wiretap channel is not equal to its randomness-assisted capacity.


In Subsection \ref{SuAc} we will show that the research in quantum
channels not only sets limitations, but also offers new fascinating
possibilities. Applying the  Ahlswede dichotomy, we can prove 
that  two arbitrarily varying classical-quantum
 wiretap channels, both with zero security capacity,
 allow perfect secure transmission, if we use them together. This is a phenomenon
called ``super-activation'' which appears in 
quantum information theory (cf. \cite{Li/Wi/Zou/Guo}).

\subsection{Further Notes on Resource Theory}\label{FNoRT}

In this subsection, we give some notes on resource theory and the Ahlswede dichotomy.\vspace{0.2cm}



1) The Ahlswede dichotomy states that either the deterministic
security capacity of an arbitrarily varying classical-quantum
 wiretap channel is zero or it equals its
randomness-assisted
 security  capacity.
There are actually arbitrarily varying classical-quantum
 wiretap channels which have zero deterministic  security capacity,
 but achieve a positive security capacity if the sender and the
 legal receiver can use a resource, as the following example shows.
 This shows that the Ahlswede dichotomy
 is indeed a ``dichotomy'', and how helpful  a resource  can be for
 the robust and secure message transmission.

\begin{example}\label{nachhinten}Let  $\{(W_t,{V}_t): t \in \theta\}$ be an arbitrarily varying classical-quantum
 wiretap channel. By Theorem \ref{dichpart}. \ref{dichpart1}, $C_s(\{(W_t,{V}_t): t \in \theta\})$ is equal to
$C_s(\{(W_t,{V}_t): t \in \theta\};r)$ if $\{W_t : t \in \theta\}$ is not
symmetrizable, and equal to zero if $\{W_t : t \in \theta\}$ is 
symmetrizable.
 If  $\{W_t : t \in \theta\}$ is
symmetrizable,
 it can actually occur that
 $C_s(\{(W_t,{V}_t): t \in \theta\})$ is zero, but
 $C_s(\{(W_t,{V}_t): t \in \theta\};r)$ is positive, as  following   example shows
(c.f. \cite{Ahl/Bli} for the case of an arbitrarily varying classical-quantum channel  without
wiretap).

Let $\theta:=\{1,2\}$. Let $\mathbf{A}$ $=$ $\{0,1\}$.
Let ${H}^{\mathfrak{B}}$ $=$ $\mathbb{C}^{3}$.
Let $\{|0\rangle^{\mathfrak{B}}, |1\rangle^{\mathfrak{B}}, 
|2\rangle^{\mathfrak{B}}$ be a set of orthonormal vectors
on ${H}^{\mathfrak{B}}$.

For $r\in[0,1]$ let $P_{r}$ be the probability distribution on $\mathbf{A}$
such that $P_{r}(0)=r$ and $P_{r}(1)=1-r$.
 We define a  channel $W_{1}$ $:P(\mathbf{A})$ $\rightarrow$ $\mathcal{S}({H}^{\mathfrak{B}})$
by 
\[W_{1} (P_{r})=  r |0\rangle\langle 0|^{\mathfrak{B}} 
+ (1-r)|1\rangle\langle 1|^{\mathfrak{B}}\text{ ,}\]
and a  channel $W_{2}$ $:P(\mathbf{A})$ $\rightarrow$ $\mathcal{S}({H}^{\mathfrak{B}})$
by 
\[W_{1} (P_{r})=  r |1\rangle\langle 1|^{\mathfrak{B}} 
+  (1-r)|2\rangle\langle 2|^{\mathfrak{B}}\text{ .}\]

In other word
\[W_{1} (0)=  |0\rangle\langle 0|^{\mathfrak{B}} \text{ ,}\]
\[W_{1} (1)=   |1\rangle\langle 1|^{\mathfrak{B}}\text{ ,}\]

\[W_{2} (0)=  |1\rangle\langle 1|^{\mathfrak{B}} \text{ ,}\]
\[W_{2} (1)=   |2\rangle\langle 2|^{\mathfrak{B}}\text{ .}\]\vspace{0.2cm}

Let ${H}^{\mathfrak{E}}$ $=$ $\mathbb{C}^{2}$.
Let $\{|3\rangle^{\mathfrak{E}}, |4\rangle^{\mathfrak{E}}$ be a set of orthonormal vectors
on ${H}^{\mathfrak{E}}$.

 We define a  channel $V_{1}$ $:P(\mathbf{A})$ $\rightarrow$ $\mathcal{S}({H}^{\mathfrak{E}})$
by 
\[V_{1} (P_{r})= |3\rangle\langle 3|^{\mathfrak{E}}\text{ ,}\]
and a  channel $V_{2}$ $:P(\mathbf{A})$ $\rightarrow$ $\mathcal{S}({H}^{\mathfrak{E}})$
by 
\[V_{1} (P_{r})= |4\rangle\langle 4|^{\mathfrak{E}}\text{ .}\]\vspace{0.2cm}

$\{(W_t,{V}_t): t \in \theta\}$
defines an arbitrarily
varying classical-quantum wiretap channel.\vspace{0.2cm}

We set
\begin{align*}&\tau(1\mid 0) = 0\text{ ; } ~~\tau(2\mid 0) = 1\text{ ;}\\
&\tau(1\mid 1) = 1\text{ ; } ~~\tau(2\mid 1) = 0\text{ .}\end{align*}

It holds
\[\sum_{t\in\theta}\tau(t\mid 0)W_{t}({0})=\sum_{t\in\theta}\tau(t\mid {0})W_{t}(0)\text{ ,}~
\sum_{t\in\theta}\tau(t\mid 1)W_{t}({1})=\sum_{t\in\theta}\tau(t\mid {1})W_{t}(1)\text{ ,}\]
and
\[\sum_{t\in\theta}\tau(t\mid 0)W_{t}({1})=|1\rangle\langle 1|^{\mathfrak{E}} 
=\sum_{t\in\theta}\tau(t\mid {1})W_{t}(0)\text{ .}\]
$\{(W_t): t \in \theta\}$ is therefore symmetrizable.
 By Theorem \ref{dichpart}. \ref{dichpart1}, we have \begin{equation}C_s(\{(W_t,{V}_t): t \in \theta\})=0\text{ .}\end{equation} \vspace{0.15cm}

By \cite{Bl/Ca}, for any arbitrarily varying classical-quantum wiretap channel $\{(W_t,{V}_t): t \in \theta\}$, we have
\begin{align}&C_s(\{(W_t,{V}_t): t \in \theta\};r)\notag\\
& \geq \max_{P\in\mathcal{P}}
\biggl(\min_{Q\in\mathcal{Q}}\chi\left(P,\{U^Q(a): a\in {\mathbf{A}}\}\right) \notag\\
&-\lim_{n\rightarrow\infty}\max_{t^n\in\theta^n}\frac{1}{n}\chi(P^n,\{{V}_{t^n}(a^n): a^n\in {\mathbf{A}}^n\})\biggr)\text{
,}\label{t1}\end{align} where $\mathcal{P}$ is the set of distributions on
$\mathbf{A}$, $\mathcal{Q}$ is the set of distributions on $\theta$, and
$U^Q(a) = \sum_{t\in\Theta}Q(t)W_{t}(a)$ for
$Q\in\mathcal{Q}$.

For  all $n\in\mathbb{N}$, $t^n\in\theta^n$, and $P^n\in\mathcal{P}^n$, we have
 $\chi(P^n,\{{V}_{t^n}(a^n): a^n\in {\mathbf{A}}^n\}) = 1 \log 1 - 1 \log 1 = 0$ and therefore
\[C_s(\{(W_t,{V}_t): t \in \theta\};r) \geq \max_{P\in\mathcal{P}}
\min_{Q\in\mathcal{Q}}\chi\left(P,\{U^Q(a): a\in {\mathbf{A}}\}\right)\text{ .}\]

We denote by $p'\in P(A)$ the distribution on $A$ such that $p'(1)=p'(2)=\frac{1}{2}$.
Let $q\in[0,1]$.  We define $Q(1)=q$,   $Q(2)=1-q$. We have
\begin{align*}
&\chi\left(p',\{W_Q^{0}(a): a\in {\mathbf{A}}\}\right)\\
&=-\frac{1}{2}q\log \frac{1}{2}q + \frac{1}{2}(1-q)\log\frac{1}{2}(1-q)
-\frac{1}{2}\log\frac{1}{2}\\
&+q\log q + (1-q)\log(1-q)\text{ .}\end{align*}
By the differentiation  by $q$,
we obtain
\begin{align*}&\frac{1}{\log e}\biggl(-\frac{1}{2}\log \frac{1}{2}q - \frac{1}{2}+ 
\frac{1}{2}\log\frac{1}{2}(1-q)+\frac{1}{2}
+\log q +1- \log(1-q)-1\biggr)\\
&=\frac{1}{2\log e}\left(\log q - \log(1-q)\right)\text{ .}\end{align*}
This term is equal to zero if and only if $q=\frac{1}{2}$. By further calculation,
one can show
that  $\chi\left(p',\{W_Q^{0}(a): a\in {\mathbf{A}}\}\right)$ achieves its minimum when  $q=\frac{1}{2}$.  This
minimum is equal to $-\frac{1}{2}\log\frac{1}{4}+\frac{1}{2}\log\frac{1}{2}$ $=$ $\frac{1}{2}$ $>0$.
Thus \[\max_p\min_{q} \chi\left(p,B_q^{0}\right)\geq\frac{1}{2}\text{ .}\]

For  all $t\in \theta$, it holds
${V}_{t}^{0} (0) = {V}_{t}^{0} (1)$ and
therefore for  all $t^n\in \theta^n$ and any $p^n\in P(A^n)$,  we have
\begin{align*}&\chi(p;Z_{t^n}^{0})\\
&= S({V}_{t^n}^{0} (p^n) ) - \sum_{a^n\in A^n} p^n(a^n)S({V}_{t^n}^{0} (a^n) ) \\
&=0\text{ .}\end{align*} By (\ref{t1}), 

\begin{equation} C_s(\{(W_t^{0},{V}_t^{0}): t \in \theta\},cr) \geq  \frac{1}{2}-0 >0\text{ .}\label{foracswtvtcr1}\end{equation}
 \vspace{0.2cm}

This shows an example of
 an arbitrarily varying classical-quantum channel such that
 its deterministic  capacity  is zero, but its random  capacity  is positive.
 \label{nachhinten1}
 \end{example}

 Thus, a ``useless'' arbitrarily varying classical-quantum channel,
 i.e., with zero deterministic  secrecy capacity,  allows
 secure transmission if the sender and the legal receiver have the possibility
 to use a resource, either randomness, common randomness, or even a
 ``cheap'', insecure, and weak correlation. Here we say ``cheap'' and
 ``weak'' in the sense of the discussion in Section
\ref{AVCQWCWCA}.

\subsubsection{Super-Activation}\label{SuAc}

One of the properties of classical channels is that in the majority
of cases, if we have a channel system where two sub-channels are
used together, the capacity of this channel system is the sum of the
 two sub-channels' capacities. Particularly,
 a system consisting of two orthogonal classical
channels, where both are ``useless''  in the sense  that they both have zero
capacity for message transmission, the capacity  for message transmission
of the whole system is zero as well (``$0 + 0 = 0$''). 
 For the definition of   ``two orthogonal
 channels'' in classical systems, please see  \cite{Fa/Ka}.

In contrast to the classical information  theory, it is known that
the capacities of quantum  channels can be super-additive, i.e.,
there are cases in which the capacity of the product $W_1\otimes W_2$
 of
 two  quantum  channels $W_1$
 and $W_2$
is
larger than the sum of the capacity of $W_1$ and
 the capacity of $W_2$ (cf. \cite{Li/Wi/Zou/Guo} and \cite{Gi/Wo}).
``The whole is greater than the sum of its parts'' - Aristotle.

 Particularly in quantum information  theory, there are examples of two quantum
channels, $W_1$ and $W_2$,  with zero capacity, which
 allow perfect transmission if they are used together, i.e.,  the   capacity of
their product $W_1\otimes W_2$ is positive, (cf. \cite{Smi/Yar},
\cite{Smi/Smo/Yar}, \cite{Opp} and also \cite{Bo/Wy} for a rare case
result when this phenomenon occurs using two classical arbitrarily
varying wiretap channels). This is due to the fact that there are
different reasons why a quantum channel can have zero capacity. We
call this phenomenon   ``super-activation'' (``$0+0 >0$'').

It is known that
 arbitrarily varying classical-quantum  wiretap
channels with positive secrecy capacities  are super-additive.
This means that the product $W_1\otimes W_2$ of
 two  arbitrarily varying classical-quantum  wiretap  channels $W_1$ and $W_2$, both
with positive secrecy capacities,  can have a capacity which
is larger than the sum of the
capacity of $W_1$ and
 the capacity of $W_2$ (cf. \cite{Li/Wi/Zou/Guo}).\vspace{0.15cm}

Using Theorem \ref{dichpart}. \ref{dichpart}, we can
 demonstrate the following Theorem,
\begin{theorem}
Super-activation occurs for  arbitrarily varying classical-quantum  wiretap
channels.\label{superactivation}\end{theorem}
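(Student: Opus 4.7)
The plan is to exhibit two arbitrarily varying classical-quantum wiretap channels, each with zero deterministic secrecy capacity, whose product channel has strictly positive deterministic secrecy capacity; Theorem \ref{dichpart} does the rest. For the first factor $(W_t^1,V_t^1)_{t\in\theta_1}$ I would take the channel of Example \ref{nachhinten}, for which $(W_t^1)$ is symmetrizable (so $C_s((W^1,V^1))=0$ by Theorem \ref{dichpart1}, 2)) while $V_t^1$ is identically the constant state $|0\rangle\langle 0|$, so that $C_s((W^1,V^1);r)>0$. For the second factor I would take a \emph{trivial wiretap channel} with $\theta_2=\{1\}$ (no jamming) and $W_1^2=V_1^2$ a noiseless classical channel on an alphabet with at least two letters; the eavesdropper then receives the same signal as the legitimate receiver, and a Fano-type bound yields $\chi(R_{uni};Z_{t^n})\geq(1-\epsilon)\log J_n-h(\epsilon)$ for every reliable $(n,J_n)$ code, forcing $C_s((W^2,V^2);r)=0$ and hence $C_s((W^2,V^2))=0$.

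The first key step is to verify that the product main channel $(W_t^1\otimes W_1^2)_{t\in\theta_1}$ is not symmetrizable. Any hypothetical symmetrizer $\tau(t\mid(a,b))$, after setting $a=a'$ in Definition \ref{symmet} for the product and taking the partial trace over the first tensor factor (using $\mathrm{tr}\,W_t^1(a)=1$), would yield $W_1^2(b')=W_1^2(b)$ for all $b,b'$, contradicting the non-constancy of $W_1^2$. Hence Theorem \ref{dichpart1}, 1) applies to the product and gives $C_s(\mathrm{prod})=C_s(\mathrm{prod};r)$.

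The second key step is the lower bound $C_s(\mathrm{prod};r)\geq C_s((W^1,V^1);r)$. I would implement it by taking a randomness-assisted code that nearly achieves $C_s((W^1,V^1);r)$ on the first tensor factor and feeding any fixed sequence into the second tensor factor; by the tensor structure of $W^{1}\otimes W^{2}$ and $V^{1}\otimes V^{2}$, the legal receiver's error probability factorizes and equals that of the channel-$1$ code, while the wiretapper's states for message $j$ take the form $\rho_j^{1}\otimes\sigma^{2}$ with $\sigma^{2}$ message-independent, so the Holevo quantity equals that of the channel-$1$ code and inherits its secrecy. Combining the two steps yields $C_s(\mathrm{prod})\geq C_s((W^1,V^1);r)>0$, while both $C_s((W^1,V^1))$ and $C_s((W^2,V^2))$ vanish by construction, which is precisely super-activation.

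The main conceptual obstacle, really only bookkeeping here, is checking that the ``ignore second factor'' construction satisfies the strong secrecy criterion (\ref{b40}) uniformly over the product state sequences, and that Theorem \ref{dichpart} is applied in the right ambient state space $\theta_1\times\theta_2$; once the tensor structure is written out this is transparent. The essential content is Ahlswede Dichotomy (Theorem \ref{dichpart}) combined with the elementary observations that non-symmetrizability of either factor is preserved under tensoring with a non-constant channel, and that tensoring with a message-independent state does not alter the Holevo quantity.
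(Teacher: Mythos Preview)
Your proposal is correct and follows essentially the same architecture as the paper's proof (Example~\ref{examplesupact}): both take the first factor to be the channel of Example~\ref{nachhinten} and the second factor to be a channel whose eavesdropper sees the input perfectly, then verify that the product main channel is not symmetrizable and invoke Theorem~\ref{dichpart1},~1). Your choices---a noiseless $W^2=V^2$ instead of the paper's BSC, the clean partial-trace contradiction for non-symmetrizability instead of a coordinate computation, and the code-embedding lower bound $C_s(\mathrm{prod};r)\geq C_s((W^1,V^1);r)$ instead of reapplying~(\ref{t1})---are mild simplifications of the same argument rather than a different route.
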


Please note that the results of \cite{Li/Wi/Zou/Guo} (super-additivity of  
arbitrarily varying classical-quantum  wiretap
channels with positive secrecy capacities) do not imply super-activation of  
arbitrarily varying classical-quantum  wiretap
channels, since here we consider channels with zero secrecy
capacity.

 We will prove Theorem  \ref{superactivation} by giving an example (Example \ref{examplesupact}) in which two
arbitrarily varying classical-quantum  wiretap
channels,   which are themselves  ``useless''  in the sense  that they have both zero
 secrecy capacity, acquire positive  secrecy capacity when
used together. This is due the following.

Suppose we have an
 arbitrarily varying classical-quantum  wiretap  channel with positive
randomness-assisted secrecy
capacity.
By Theorem \ref{dichpart}. \ref{dichpart2},  the randomness-assisted secrecy
capacity is equal to the common randomness-assisted secrecy
capacity.
 But the problem for the sender and the legal receiver is that each
party does not know  which code is used in the
particular transmission if the channel that connects them
 has zero deterministic  capacity for message
transmission. However, suppose we have  another arbitrarily varying
classical-quantum
 wiretap  channel which has a positive  deterministic  capacity for message transmission.
Then the sender and the legal receiver can use it to transmit
 which code is used in the particular
transmission. This is possible even when the second arbitrarily
varying  classical-quantum
 wiretap  channel has
 zero randomness-assisted secrecy
capacity, since we allow the  wiretapper
to know  which specific code is 
 used.

We may see it in the following way. If we have two arbitrarily
varying  classical-quantum
 wiretap  channels, one of them is relatively  secure, but not very robust against
 jamming, while the other one is relatively robust, but not very secure against
 eavesdropping. We can
achieve that they ``remove'' their weaknesses from each other, or, in
other words, ``activate'' each other.\vspace{0.2cm}

We now give an example of super-activation for arbitrarily varying classical-quantum  wiretap  channels.\vspace{0.15cm}

\begin{example}
Let $\theta=\{1,2\}$, $\mathbf{A}=\{0,1\}$, and let $H=H'$ be
spanned by the orthonormal vectors $|0\rangle$ and  $|1\rangle$. We define
$\{(W_t,{V}_t): t \in \theta\}$ as in Example \ref{nachhinten1}.
We define
$\{({W'}_t,{V'}_t):t \in \theta\}$ by
\begin{align}&{W'}_1(0) = \frac{3}{4}|0\rangle\langle 0|+\frac{1}{4}|1\rangle\langle 1|\notag\\
&{W'}_1(1) = \frac{1}{4}|0\rangle\langle 0|+\frac{3}{4}|1\rangle\langle 1|\notag\\
&{W'}_2(0) = \frac{3}{4}|0\rangle\langle 0|+\frac{1}{4}|1\rangle\langle 1|\notag\\
&{W'}_2(1) =\frac{1}{4}|0\rangle\langle 0|+\frac{3}{4}|1\rangle\langle 1|\notag\\[0.15cm]
&{V'}_1(0) = |0\rangle\langle 0|\notag\\
&{V'}_1(1) = |1\rangle\langle 1|\notag\\
&{V'}_2(0) = |0\rangle\langle 0|\notag\\
&{V'}_2(1) = |1\rangle\langle 1|
\text{ .}\label{example2}\end{align}

We denote   the uniform distribution on $\mathbf{A}$ by $P$.
We have $P(0)=P(1)=\frac{1}{2}$. By \cite{Ahl/Bli}
the capacity  of $\{{W'}_t: t \in \theta\}$ is larger or equal to $\min_{Q\in\mathcal{Q}}\chi\left(P,\{U^Q(a): a\in {\mathbf{A}}\}\right)$
$=\frac{1}{2}-\frac{3}{4}\log\frac{3}{4}>0$.\vspace{0.15cm}

However, for all $(n,J_n)$ code $\bigl(E^n, \{D_j^n : j = 1,\ldots, J_n\}\bigr)$
the wiretapper can define a set of decoding operators
$\{D_{j,wiretap}^n: j= 1,\ldots J_n\}$ by $D_{j,wiretap}^n:=\sum_{a^n}E^n\left(a^n\mid j\right)
\left(\bigotimes_{i}|a_i\rangle\right)\left(\bigotimes_{i}\langle a_i|\right)$.
For any   probability distribution   $Q^n$
on ${\mathbf{A}}^n$, denote the wiretapper's random output using
$\{D_{j,wiretap}^n: j= 1,\ldots, J_n\}$ at channel state $t^n$ by $C_{t^n}$,
then $\chi(Q^n,Z_{t^n})\geq I(Q^n,C_{t^n})=H(Q^n)$, where $I(\cdot,\cdot)$ is
the mutual information, and $H(\cdot)$ is the Shannon entropy
(please cf. \cite{Cs/Ko} for the definitions of the mutual information and  the Shannon entropy
for classical random variables). If $\chi(R_{\mathrm{uni}},Z_{t^n})< \frac{1}{2}$ holds,
we also have $\log J_n = H(R_{\mathrm{uni}})< \frac{1}{2}$, but this implies $J_n=1$. Thus
\begin{equation}C_s(\{({W'}_t,{V'}_t):t \in \theta\})=0\text{ .}\end{equation}\vspace{0.2cm}

Let us now consider the arbitrarily
varying  classical-quantum wiretap channel $\biggl\{\Bigl(W_{t_1}\otimes {W'}_{t_2},{V}_{t_1}\otimes  {V'}_{t_2}\Bigr):(t_1,t_2) \in \theta^2\biggr\}$,
where  $\Bigl\{\Bigl(W_{t_1}\otimes {W'}_{t_2}\Bigl):(t_1,t_2) \in \theta^2\Bigr\}$ is an arbitrarily
varying  classical-quantum channel
$\{(00),(01),(10),(11)\}\rightarrow H^{\otimes 2}$, $(a,a')\rightarrow W_{t_1} (a) \otimes  {W'}_{t_2} (a')$,
and  $\Bigl\{\Bigl({V}_{t_1}\otimes  {V'}_{t_2}\Bigl):(t_1,t_2) \in \theta^2\Bigr\}$ is an arbitrarily
varying  classical-quantum channel
$\{(00),(01),(10),(11)\}\rightarrow H^{\otimes 2}$, $(a,a')\rightarrow {V}_{t_1} (a) \otimes  {V'}_{t_2} (a')$,
if the channel state is $(t_1,t_2)$.\vspace{0.15cm}

We have
\begin{equation}C_s\left(\biggl\{\Bigl(W_{t_1}\otimes {W'}_{t_2},{V}_{t_1}\otimes  {V'}_{t_2}\Bigr):(t_1,t_2) \in \theta^2\biggr\};r\right) \geq
\frac{1}{2}>0
\text{ .}\end{equation}\vspace{0.15cm}

Assume $\Bigl\{\Bigl(W_{t_1}\otimes {W'}_{t_2}\Bigl):(t_1,t_2) \in \theta^2\Bigr\}$
is symmetrizable, then there exists
 a
parametrized set of distributions $\{\tau(\cdot\mid (a,a')):
 (a,a')\in  \{(00),(01),(10),(11)\}\}$ on $\theta^2$ such that for all $(a,a')$, $(b,b')\in \{(00),(01),(10),(11)\}$ it holds
\begin{align}&\sum_{(t_1,t_2) \in \theta^2}\tau((t_1,t_2)\mid (b,b'))W_{t_1} (a) \otimes  {W'}_{t_2} (a')\notag\\
&=\sum_{(t_1,t_2) \in \theta^2}\tau((t_1,t_2)\mid (a,a'))W_{t_1} (b) \otimes  {W'}_{t_2} (b')\text{ .}\label{sowieBoWy}\end{align}

(\ref{sowieBoWy}) implies that
\begin{align}&\sum_{(t_1,t_2) \in \theta^2}\tau((t_1,t_2)\mid (0,0))W_{t_1} (0) \otimes  {W'}_{t_2} (1)\notag\\
&=\sum_{(t_1,t_2) \in \theta^2}\tau((t_1,t_2)\mid (0,1))W_{t_1} (0) \otimes  {W'}_{t_2} (0)\notag\\
& \Rightarrow  \left(\tau((1,1)\mid (0,0))+\tau((1,2)\mid (0,0))\right)|0\rangle\langle 0|\otimes
 \left(\frac{1}{4}|0\rangle\langle 0|+\frac{3}{4}|1\rangle\langle 1|\right) \notag\\
&~ + \left(\tau((2,1)\mid (0,0))+\tau((2,2)\mid (0,0))\right)|1\rangle\langle 1|\otimes
 \left(\frac{1}{4}|0\rangle\langle 0|+\frac{3}{4}|1\rangle\langle 1|\right) \notag\\
&~  =   \left(\tau((1,1)\mid (0,1))+\tau((1,2)\mid (0,1))\right)|0\rangle\langle 0|\otimes
 \left(\frac{3}{4}|0\rangle\langle 0|+\frac{1}{4}|1\rangle\langle 1|\right) \notag\\
&~ + \left(\tau((2,1)\mid (0,1))+\tau((2,2)\mid (0,1))\right)|1\rangle\langle 1|\otimes
 \left(\frac{3}{4}|0\rangle\langle 0|+\frac{1}{4}|1\rangle\langle 1|\right) \notag\\
& \Rightarrow \left(\tau((1,1)\mid (0,0))+\tau((1,2)\mid (0,0))\right) =
9\left(\tau((1,1)\mid (0,0))+\tau((1,2)\mid (0,0))\right) \text{ and } \notag\\
&~  \left(\tau((2,1)\mid (0,0))+\tau((2,2)\mid (0,0))\right) =
9 \left(\tau((2,1)\mid (0,0))+\tau((2,2)\mid (0,0))\right) \notag\\
& \Rightarrow \lightning
\text{ .}\label{sowieBoWye}\end{align}
\vspace{0.15cm}

Therefore $\Bigl\{\Bigl(W_{t_1}\otimes {W'}_{t_2}\Bigl):(t_1,t_2) \in \theta^2\Bigr\}$
is not symmetrizable, and by Theorem \ref{dichpart}. \ref{dichpart1},
\begin{align}&C_s\left(\biggl\{\Bigl(W_{t_1}\otimes {W'}_{t_2},{V}_{t_1}\otimes  {V'}_{t_2}\Bigr):(t_1,t_2) \in \theta^2\biggr\}\right) \notag\\
&= C_s\left(\biggl\{\Bigl(W_{t_1}\otimes {W'}_{t_2},{V}_{t_1}\otimes  {V'}_{t_2}\Bigr):(t_1,t_2) \in \theta^2\biggr\};r\right) \notag\\
&>0\text{ .}\end{align} \label{examplesupact}\end{example}

This example shows that although both
$\{(W_t,{V}_t): t \in \theta\}$ and  $\{({W'}_t,{V'}_t):t \in \theta\}$ are themselves useless, they
 allow secure transmission using together (``$0+0 >0$''). Thus Theorem  \ref{superactivation}
is proven. This shows that the research in quantum channels with
channel uncertainty and eavesdropping can   lead to some promising
 applications.

\section{Conclusion}\label{concl}
In this paper, we studied message transmission over
 a classical-quantum  channel with both a jammer and an eavesdropper, which is
called an arbitrarily varying classical-quantum  wiretap  channel.
We also studied how helpful various resources  can be.

The Ahlswede dichotomy for classical arbitrarily varying channels
was introduced  in \cite{Ahl1}. The
Ahlswede dichotomy for arbitrarily varying classical-quantum
channels was established in \cite{Ahl/Bj/Bo/No}. In our paper, we
have generalized the result of \cite{Bl/Ca}  by  establishing the
Ahlswede dichotomy for arbitrarily varying classical-quantum wiretap
channels:
 Either the deterministic secrecy capacity of an
arbitrarily varying classical-quantum wiretap channel is zero, or it
equals its randomness-assisted secrecy
 capacity. Interestingly, the Ahlswede dichotomy shows that the
deterministic capacity for secure message transmission is, in general, 
not specified by entropy quantities. This is a new  behavior in communication
 due to active wiretap attacks.

Dealing with channel uncertainty and eavesdropping is one of the main tasks
 in modern communication systems, caused, for example, by hardware imperfection.
For  practical implementation, a
reasonable assistance for the transmitters
is to share  resources.
For example, in   wireless communication, the communication service
may send some  signals via satellite to its users.
Hence, we analyzed the secrecy capacities of various coding
schemes with resource assistance. 
 A surprising and
promising result of this paper is that  the resources do  not have to be secure
themselves to be helpful for secure message
transmission considering channel uncertainty. Another interesting 
fact is that in \cite{Bo/No}, it has
been shown that the correlation is a much ``cheaper'' resource than
  randomness and   common randomness. However, the results in
this paper show that for secure message transmission considering channel uncertainty.
   correlation is as  helpful  as
  randomness and   common randomness. Furthermore, a correlation
$(X,Y)$ does not have to be ``very good'' to be helpful in achieving a positive
secrecy capacity, since $(X,Y)$ is a helpful resource even if
$I(X,Y)$  is only slightly larger than
zero. We also gave an example that shows not only theoretically, but
also physically, how helpful a resource can be. In this example,
an arbitrarily varying classical-quantum wiretap channel has zero
deterministic secrecy capacity, but as soon as the sender and the
receiver can use a resource, either randomness, common randomness,
or correlation, we can achieve positive secrecy capacity. This
example  shows that for  communication in  practice, having   weak
public signals will be 
 very useful.


In \cite{Smi/Yar} and \cite{Smi/Smo/Yar}, it has been shown that the
phenomenon  ``super-activation'' can occur for certain quantum
channels (``$0+0>0$''). In this paper, we have proved that
``super-activation'' occurs for  arbitrarily varying
classical-quantum  wiretap channels. In   classical information
theory, adding a telegraph wire that relays  no information  to a
system does not help in the majority of cases. Our result shows that
for message transmission over
 classical-quantum  channels with both a jammer and an eavesdropper,
adding a fiber-optic cable that relays  non-secure information can be
really useful. This result sets a new challenging task for the design of
media access control, which is an important topic for standardization
and certification. Unlike in   classical communication, for quantum
media access control, we have to consider that we can lose security if we
have two orthogonal useless arbitrarily varying classical-quantum wiretap channels.
To provide security, we therefore need a  more sophisticated design of
media access control than in the classical case. For example, 
we have to avoid
two useless channels to be orthogonal.

\section*{Acknowledgment}
Support by the Bundesministerium f\"ur Bildung und Forschung (BMBF)
via Grant 16KIS0118K and  16KIS0117K is gratefully acknowledged.

\end{document}